\renewcommand{\em}{\it}
\newcommand{\RU}{R}
\newcommand{\RUinv}{\RU^{-1}}
\newcommand{\etabar}{{\eta}}
\newcommand{\ud}{\mathrm{d}}
\def\Re{\mathbb{R}}
\def\R{\mathbb{R}}
\def\argmin{\mathop{\text{\rm arg\,min}}}
\def\transpose{{\hbox{\rm\tiny T}}}
\newcommand{\E}[1]{\mathbb{E}\left[#1\right]}
\def\Expect{{\sf E}}
\newcommand{\cov}{\text{Cov}}
\newcommand{\backward}[1]{\overset{\shortleftarrow}{#1}}
\newcommand{\Ybarbar}{{Y}}
\newcommand{\SN}{S^{(N)}}
\newcommand{\trace}{\text{Tr}}
\newcommand{\OmegaN}{\Omega^{(N)}}
\newcounter{rmnum}
\newcounter{anum}
\newcommand{\Ricc}{\text{Ricc}}
\def\FRAC#1#2#3{\genfrac{}{}{}{#1}{#2}{#3}}
\def\half{{\mathchoice{\FRAC{1}{1}{2}}%
		{\FRAC{2}{1}{2}}%
		{\FRAC{3}{1}{2}}%
		{\FRAC{4}{1}{2}}}}
\newcommand{\inv}{^{-1}}
\newcommand{\tp}{^\transpose}
\newcommand{\normal}{\mathcal{N}}
\newcommand{\vf}{\mathcal{I}}
\newcommand{\correct}{\mathcal{C}}
\newcommand{\grad}{\nabla}
\newcommand{\pbart}{\bar{p}_t}
\newcommand{\bij}{\psi}
\newcommand{\stepsize}{\tau}
\newcommand{\iid}{\stackrel{\text{i.i.d}}{\sim}}
\newcommand{\identity}{\mathbb{I}}
\newcommand{\simulator}{\mathcal{S}}
\newcommand{\ip}[2]{\langle #1, #2 \rangle}
\newcommand{\sgn}{\mathrm{sgn}}
\newcommand{\ones}{\mathbf{1}}
\newtheorem{assumption}{Assumption}
\newcommand{\coloneqq}{:=}
\title{Interacting Particle Systems for Fast Linear Quadratic RL}
\author{%
 \Name{Anant A. Joshi} \Email{anantaj2@illinois.edu}\\
\addr University of Illinois Urbana-Champaign 
 \AND
 \Name{Heng-Sheng Chang} \Email{hschang2@illinois.edu}\\
 \addr University of Illinois Urbana-Champaign %
 \AND 
 \Name{Amirhossein Taghvaei} \Email{amirtag@uw.edu}\\
 \addr University of Washington Seattle 
 \AND
 \Name{Prashant G. Mehta} \Email{mehtapg@illinois.edu}\\
 \addr University of Illinois Urbana-Champaign 
 \AND
 \Name{Sean  P. Meyn} \Email{meyn@ece.ufl.edu}\\
 \addr University of Florida at Gainesville
}
\begin{document}

\maketitle

\begin{abstract}

This paper is concerned with the design of algorithms based on systems of interacting particles to represent, approximate, and learn the optimal control law for reinforcement learning (RL). The primary contribution is that convergence rates are greatly accelerated by the interactions between particles. Theory focuses on the linear quadratic stochastic optimal control problem for which a complete and novel theory is presented. Apart from the new algorithm, sample complexity bounds are obtained, and it is shown that the mean square error scales as $1/N$ where $N$ is the number of particles. The theoretical results and algorithms are illustrated with numerical experiments and comparisons with other recent approaches, where the faster convergence of the proposed algorithm is numerically demonstrated.

\end{abstract}

\begin{keywords}%
  List of keywords%
\end{keywords}

\section{Introduction}

This paper concerns approaches to reinforcement learning (RL) based on the construction of interacting particle systems. The development is in continuous time, and the state is assumed to evolve according to a linear stochastic differential equation (SDE),
\begin{equation}\label{eq:dyn}
\ud X_t = (AX_{t} + B U_t)\ud t + \sigma \ud W_t,\quad X_0=x
\end{equation}
where $X:=\{X_t:0\leq t\leq T\}$ is the $\Re^d$-valued state process, $U:=\{U_t:0\leq t\leq T\}$ is the $\Re^m$-valued control input, and $W:=\{W_t:0\leq t\leq T\}$ is a standard Brownian motion (B.M.), and
$A,B,\sigma$ are matrices of appropriate dimensions. 

The proposed approach is related to actor-only methods, also known as the policy optimization (PO) approach,  of which Williams' REINFORCE algorithm is most classical 
\cite{williams-1992}. 
The linear model is the subject of recent work in PO: two types of optimal control objectives have been considered, namely, 
linear quadratic Gaussian (LQG)~\citep{basei-2022} linear exponential quadratic Gaussian (LEQG)~\citep{zhang-2021-neurips,roulet-2020-acc}, 
and average cost versions of these~\citep{krauth-2019,
yadkori-2019,yang-2019-neurips,cassel-2021,
yaghmaie-2023,
hernandez-2023}.

A standard PO approach in the linear quadratic setting is special because the policies $ \{ \upkappa^\theta :  \theta\in\Re^n \}$ may be chosen deterministic and linear.  A basic algorithm is described as the following recursion: 
Starting from an initial stabilizing gain $K^0$, a sequence of gains $\{K^j:j=1,2,\hdots,M\}$ are learnt.  During the $j$-th iteration, the gain $K^j$ is evaluated by simulating $N$ copies of the model over a time-horizon:
\begin{subequations}
  \label{eq:dyn_PO}
  \begin{align}
    \ud X_t^i &= (AX_{t}^i + B K_t^j X_{t}^i )\ud t + \sigma \ud W_t^i,\quad 0\leq t\leq T,\quad 1\leq i\leq N\\
    X^i_0 &\stackrel{\text{i.i.d}}{\sim} \mathcal
N(0,I),\quad 1\leq i\leq N
    \end{align}
\end{subequations}
These evaluations are helpful to compute the gain $K^{j+1}$ through a gradient-descent procedure.     

When $N=1$ the algorithm might be applied using observed samples from a physical system;  otherwise, this technique requires a simulator to generate particles.    
The main message of this paper is that the use of a simulator combined with 
 carefully designed mean-field interactions between simulations (the particles) will ensure far greater efficiency in the learning process.  



\paragraph{Algorithm proposed in this paper.}
Simulate an interacting particle system:
\begin{subequations}\label{eq:dual_enkf_intro}
\begin{align}
\ud Y_t^i &= \underbrace{AY_{t}^i \ud t + B \ud \eta_t^i + \sigma \ud W_t^i}_{\text{copy of model}} + \underbrace{{\cal A}_t(Y_t^i; p_t^{(N)})
  \ud t}_{\text{mean-field interaction}},\quad 0\leq t\leq T,\quad 1\leq i\leq N \label{eq:dyn-N} \\
    {Y}^i_T  &\stackrel{\text{i.i.d}}{\sim} \mathcal
N(0,\mathcal{Y}),\quad 1\leq i\leq N, 
\end{align}
\end{subequations}
where $p_t^{(N)}$ is the empirical distribution of the ensemble $\{Y_t^i:1\leq i \leq N\}$. The specification of the terminal condition at time $t=T$ means that the system is simulated backward-in-time.  The three design variables are as follows:
\begin{itemize}
\item[(i)] $\mathcal{Y}$ is the covariance matrix to sample the $N$ particles at the terminal time. 
\item[(ii)] $\eta:=\{\eta^i:1\leq i\leq N\}$ where $\eta^i:= \{\eta_t^i, :0\leq t\leq T\}$ is the control input for the $i$-th particle. These inputs are designed to be independent B.M. with a prescribed covariance.   
\item[(iii)] ${\cal A}:= \{{\cal A}_t:0\leq t\leq T\}$ is a mean-field process which couples the simulations. The phrase ``mean-field'' means that the coupling depends {\em only} upon the (empirical) distribution $p_t^{(N)}$.
\end{itemize}
The triple $(\mathcal{Y},\eta,{\cal A})$ are designed with the goal that the empirical covariance of the ensemble $\{Y_t^i:1\leq i \leq N\}$ approximates the solution of the differential Riccati equation (DRE) at time $t$.  The resulting system is referred to as the {\em dual ensemble Kalman filter}. 




\paragraph{\textbf{Contributions:}} The paper builds on \cite{anant-2022} to include stochastic control systems.  The novel aspects of the present paper are three-fold:  \textbf{(i)} the algorithms and the analysis are extended to stochastic and robust/risk sensitive settings of the problem in a single unified framework; \textbf{(ii)} an error analysis is provided for finite but large $N$ interacting particle system; and \textbf{(iii)} sample complexity results are obtained and comparison of the same provided against state-of-the-art algorithms for linear quadratic RL (iv) comparisons are presented to the path integral control framework, which is  related to our approach and used widely in RL and robotics. (v) This paper includes numerical simulation and comparison for two benchmark example problems from previous works in this area.  

The salient features of the proposed algorithm are as follows: \textbf{(i)} It is not necessary that the matrix $A$ is Hurwitz or that a stabilizing gain matrix $K^0$ is known (this is an assumption in many of the prior studies on PO); and \textbf{(ii)} convergence theory relies on law of large numbers (LLN) and spectral constant known from the DRE theory.  Specifically, as $N\to\infty$, the proposed algorithm yields a learning rate that approximates the exponential rate of convergence of the solution of the DRE; and \textbf{(iii)} it alleviates issues like weight collapse and curse of dimensionality that are inherited by algorithms based on the importance sampling paradigm. 


\section{Problem formulation}
\textbf{Notation:}$\|\cdot\|_F$ denotes Frobenius norm for matrices,  $|\cdot|$ denotes 2-norm for vectors and $|\cdot|_M$ denotes weighted 2-norm under positive definite matrix $M$, that is, $|z|_M := z\tp M z$, $\normal(\text{mean},\text{covariance})$ denotes normal distribution, $\identity$ is used for identity matrix.

In linear quadratic settings, the cost function is quadratic as follows:
\[
c(x,a) = \half |C x|^2 + \half |a|_R^2,\quad x\in\Re^d,\;a\in\Re^m
\]

\begin{table}
\caption{Expressions for DRE, where $D:= B  \RUinv B^\transpose$ and $\Sigma := \sigma\sigma\tp$.}
\centering
\vspace{1em}
\begin{tabular}{@{}ccc@{}} \toprule
Cost & ${\mathcal{D}}(\Lambda)$ & ${\mathcal{D}^{\dagger}}(\Lambda)$  \\ \midrule
LQG & {$A^\transpose \Lambda + \Lambda A + C^\transpose C - \Lambda D \Lambda$}& {$A\Lambda + \Lambda A^\transpose - D + \Lambda C^\transpose C \Lambda$} \\ \midrule
LEQG & $ A^\transpose \Lambda + \Lambda A + C^\transpose C  - \Lambda(D - \theta \Sigma ) \Lambda$ & $ A\Lambda + \Lambda A^\transpose - \frac{1}{|\theta|}(D - \theta\Sigma) + |\theta|\Lambda C^\transpose C \Lambda$ \\ \bottomrule
\end{tabular}
\label{tb:dre}
\end{table}

Based on this, the following types of stochastic optimal control problems, linear quadratic Gaussian (LQG), linear exponential quadratic Gaussian (LEQG), and their average counterparts are considered (with $\theta \in \Re \setminus \{0\}$):
\begin{align}
J_T^{\text{\tiny LQG}}(U) &:= \E{ \int_0^T c(X_t,U_t) \ud t + \half\ |X_T|^2_{G}}, \tag{\text{LQG}} \\
J_T^{\text{\tiny LEQG}}(U) &:= \theta^{-1}\log\mathbb{E} \left[ \exp{\theta \left\{ \int_0^T c(X_t,U_t)  \ud t + \ \half|X_T|^2_{G}   \right\} } \right],  \tag{\text{LEQG}}\\
J^{\text{AVG,}i}(U) &:= \limsup_{T \to \infty}\frac{1}{T}J_T^{i}(U),\qquad i \in \{ \text{LQG}, \text{LEQG} \}. \tag{\text{AVG}}
\end{align}
For the LEQG problem, $\theta$ is referred to as the risk parameter: The case $\theta > 0$ is known as risk-averse and $\theta < 0$ as risk-seeking \cite{nagai-2013}. The problem is to choose the control $U$ to minimize the respective value $J(U)$ subject to the linear Gaussian dynamics~\eqref{eq:dyn}. A standard set of assumptions--that are also made here--are now listed.

\begin{assumption} \label{assn:model}
$(A,B)$ is controllable, and $C\tp C, R,\,G\succ 0$  and for LEQG, $BR\inv B\tp - \theta\sigma\sigma\tp \succ 0$. 
\end{assumption}

The main point of difference from the classical treatment is that the linear Gaussian model~\eqref{eq:dyn} is available {\em only} in the form of a simulator.

\begin{definition}[Simulator]
A simulator of \eqref{eq:dyn}, denoted $\simulator$, takes the current state $x\in\Re^d$, control $a\in\Re^m$ and (small) time-step $\stepsize$ as input  and gives the following random  variable as output
\[
  \simulator(x,a;\stepsize) = (Ax + Ba)\stepsize + \sigma\Delta W \quad \text{where}  \;\;\Delta W \iid \normal(0,\identity\stepsize).
\]
\label{defn:sim}
\end{definition}

\begin{remark}[Simulations and RL] 
\label{rmk:sim}
The simulator takes the current state and control at every call, and the random variables $\Delta W$ are i.i.d. from $\normal(0,\tau)$ for every simulator call.
A standard assumption in RL is that the state is available at every time $t$.  Outside of a simulation type setting, it is difficult to describe a system where such an assumption holds: Most real-world systems have partial observation of the states through noisy sensor outputs.  Next, many types of RL algorithms implement multiple iterations of the type~\eqref{eq:dyn_PO}, e.g., \cite[Algorithm 1,2]{krauth-2019}, \cite[Algorithm 2]{yang-2019-neurips}, \cite[Algorithm 1]{basei-2022}, \cite[Algorithm 1]{cassel-2021} \cite[Algorithm 3]{yaghmaie-2023}, \cite[Algorithm 2]{zhang-2021-neurips}, \cite[Algorithm 2]{cui-2023-l4dc}, \cite[Algorithm 1]{lai-2024}.  More discussion in Appendix \ref{app:theory}.
\end{remark}

\subsection{Riccati equation and the Q function}

Consider a matrix-valued process $\{P_t:0\leq t\leq T\}$ obtained from solving the DRE as follows:
\begin{align}
\label{eq:Riccati}
-\frac{\ud}{\ud t} P_t &= \mathcal{D}(P_t), \quad 0\leq t\leq T,\quad 
  P_T = G &
\end{align}
where the expressions for the Riccati operator $\mathcal{D}(\cdot)$ are given in Table \ref{tb:dre}.  While the DRE is the optimality equation for the finite time-horizon, the average cost solution is obtained by letting the time-horizon $T\to\infty$.  Because $(A,B)$ is controllable and $(A,C)$ is observable, for any fixed time $t$, $P_t \to \bar{P}$ which solves the ARE: $\mathcal{D}(\bar{P}) = 0$ (\cite[Theorem 3.7]{ksivan}). 

\begin{definition}[Q-function] The continuous-time Q-function (or Hamiltonian) is defined as
\begin{align*}
  \mathcal{Q}(x,a;t) & := c(x,a)   + x{\tp} P_t (Ax + Ba),\quad 0\leq t\leq T, \;\;x\in\Re^d,\;a\in\Re^m \quad &{\text{for LQG, LEQG}}\\
  \bar{\mathcal{Q}}(x,a) & := c(x,a)   + x{\tp} \bar{P} (Ax + Ba),\quad x\in\Re^d,\;a\in\Re^m \quad &{\text{for AVG}}
\end{align*}
\end{definition}

Then (see~\cite{liberzon}),
\[
U_t^{\text{opt}} = \begin{cases} \argmin_{a\in\Re^m} \mathcal{Q}(X_t,a;t),\quad 0\leq t\leq T, \quad &\text{LQG, LEQG}\\[10pt]
\argmin_{a\in\Re^m} \bar{\mathcal{Q}}(X_t,a),  \quad &\text{AVG}
\end{cases}
\]
Because the Q function is quadratic, it is easily verified that the optimal control law is linear:
\[
U_t^{\text{opt}} = \begin{cases} K_t X_t, \quad K_t := -R^{-1}B\tp P_t, \quad 0\leq t\leq T, \quad &\text{LQG, LEQG}\\
 \bar{K} X_t , \quad \bar{K} := -R^{-1}B\tp \bar{P}, \quad &\text{AVG}
  \end{cases}
\]

The analysis of this paper requires consideration of $P_t\inv$.  
Since $G \succ 0$, it holds that $P_t\succ 0$ for $0\leq t\le T$~\cite[Sec.~24]{brockett2015finite}.  Therefore, $P_t\inv$ is well-defined. For every $t \in [0,T]$,
\begin{align} \label{eq:St}
S_t := P_t\inv \text{ for LQG;} \qquad \text{and} \qquad
S_t := (|\theta|P_t)^{-1} \text{ for LEQG}.
\end{align}
Then $\{S_t:0\leq t\leq T\}$ solves the dual DRE as follows:
\begin{align*}
  -\frac{\ud}{\ud t} S_t &= \mathcal{D}^{\dagger}(S_t), \quad 0\leq t\leq T,\quad 
  S_T  = G^{-1}
\end{align*}

\section{Interacting particle algorithm}
\label{sec:alg}
In this section, two sets of algorithms are described to approximate the optimal control law based only on the use of the simulator.  These are as follows:

$\bullet$ \textbf{Offline algorithm for solving DRE.} The goal is to learn an approximation of the Q-function.  These approximations for the finite time-horizon and the average cost problems are denoted as $\mathcal{Q}^{(N)}$ and $\bar{\mathcal{Q}}^{(N)}$, respectively. Additional background and justification appears in Appendix \ref{app:theory}.


$\bullet$ \textbf{Online algorithm for computing the optimal control.}  For each fixed time $t$, the optimal control is obtained by taking an arg min of approximate Q-function.

\subsection{Dual EnKF for approximating solution of DRE}

Consider the interacting particle system~\eqref{eq:dual_enkf_intro}.  The triple $(\mathcal{Y}, \eta, {\cal A})$ is designed as follows:

\textbf{(i) Design of $Y_T^i$:} Sample $Y_T^i \iid \normal(0,S_T)$ for $i=1,2,\hdots,N$.

\textbf{(ii) Design of $\eta^i$:} The input $\eta^i$ are i.i.d copies of a B.M. $\eta$ whose covariance is
\begin{align}
\cov(\eta) = 
R\inv\text{ for LQG}, \quad \text{ and } \quad 
\cov(\eta) =  (\sqrt{|\theta|}R)\inv \text{ for LEQG}.
\label{eq:cov}
\end{align}
In the context of RL, $\eta$ has an interpretation as the exploration signal.  The form~\eqref{eq:cov} of the covariance means that the cheaper control directions are explored more.

\textbf{(iii) Design of $\mathcal{A}_t$:} The interaction term is a mean-field type linear control law as follows:
\begin{align}
\mathcal{A}_t(z;p_t^{(N)}) \coloneqq 
\begin{cases}
\half L_t^{(N)}C(z+n_T^{(N)}) + \half \Sigma (S_t^{(N)})\inv(z - n_t^{(N)}); \quad &\text{LQG} \\ \noalign{\vskip9pt}
\frac{|\theta|}{2} L_t^{(N)}C(z+n_T^{(N)}) + \sgn(\theta)\Sigma (S_t^{(N)})\inv(z - n_t^{(N)}); \quad &\text{LEQG}
\end{cases}
\label{eq:At}
\end{align}
where $\Sigma: =\sigma \sigma^\transpose$, $n_t^{(N)}:=N^{-1} \sum_{i}{Y_t}^i$,  and 
\begin{align*}
L_t^{(N)} &:= \tfrac{1}{N-1}\sum_{i=1}^N
  ({Y_t}^i - n_t^{(N)} )(C{Y_t}^i - C n_t^{(N)} )\tp,  \,  \SN_t := \tfrac{1}{N-1}\sum_{i=1}^N({Y}^i_t-n^{(N)}_t)({Y}^i_t-n^{(N)}_t)\tp.
\end{align*}

From~\eqref{eq:St}, provided the right-hand side is well-defined,
\begin{align} \label{eq:St_inverse}
P_t^{(N)} := (S_t^{(N)})\inv\text{ for LQG,} \quad \text { and } \quad 
P_t^{(N)} := (|\theta|S_t^{(N)})^{-1}; \text{ for LEQG},
\end{align}
and for the average cost problem, $\bar{P}^{(N)}:=P_0^{(N)}$.
The error analysis is the subject of the following main result of this paper.

\begin{theorem}
 Consider the dual EnKF~\eqref{eq:dual_enkf_intro} under Assumption \ref{assn:model}.  Then for $N \ge d+1$, for each fixed $t$,
\begin{subequations} \label{eq:error-S}
\begin{align}
\text{(Finite-horizon)} \quad \Expect[\|S^{(N)}_{t}-{S}_{t}\|_F^2] &\leq
\frac{C_1}{{N}}, \quad \Expect[\|P^{(N)}_{t}-{P}_{t}\|_F^2] \leq \frac{C_4}{{N}}, \qquad 0\leq t\leq T, \label{eq:error-ST} \\
\text{(Average cost)} \quad \Expect[\|S^{(N)}_{t}-\bar{S}\|_F^2] &\leq
\frac{C_2}{{N}}, \quad \Expect[\|P^{(N)}_{t}-\bar{P}\|_F^2] \leq
\frac{C_5}{{N}}, \qquad
 \text{as}\;T\to\infty
\end{align}
\end{subequations}
(where $C_1,C_2,C_3,C_4$ are model dependent but time-independent constants). For the average cost problem, there exists a constant $\lambda>0$ such that exponential convergence to the stationary solution is obtained as follows:
\begin{align}
\Expect[\|S^{(N)}_{t}-\bar{S}\|_F^2] &\leq
\frac{C_2}{{N}} + C_3e^{-2\lambda (T-t)} \Expect[\|\SN_T-\bar{S}\|_F^2],  \qquad 0\leq t\leq T \label{eq:error-Sinf} 
\end{align}
\end{theorem}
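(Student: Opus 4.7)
The approach is to derive an Itô-type evolution equation for the empirical dual covariance $\SN_t$ and compare it directly with the deterministic dual DRE $-\tfrac{d}{dt}S_t = \mathcal{D}^\dagger(S_t)$. Let $\tilde{Y}^i_t := Y^i_t - n^{(N)}_t$. Applying Itô's formula to $\SN_t = \tfrac{1}{N-1}\sum_{i=1}^N \tilde{Y}^i_t(\tilde{Y}^i_t)^\tp$ using the particle dynamics \eqref{eq:dual_enkf_intro}, the specific choices of $\cov(\eta)$ in \eqref{eq:cov} and $\mathcal{A}_t$ in \eqref{eq:At} are designed so that the leading-order drift collapses to exactly $\mathcal{D}^\dagger(\SN_t)$. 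This yields, backward in $t$,
\begin{equation*}
-d\SN_t = \mathcal{D}^\dagger(\SN_t)\,dt + \tfrac{1}{\sqrt{N}}\,dM^{(N)}_t + \tfrac{1}{N}\,r^{(N)}_t\,dt,
\end{equation*}
where $M^{(N)}_t$ is a matrix-valued martingale with $O(1)$ quadratic variation and $r^{(N)}_t$ collects finite-$N$ corrections from replacing population moments with empirical ones. The LEQG case is parallel once the factors $|\theta|$ and $\sgn(\theta)$ are tracked through $\mathcal{A}_t$.

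Setting $E^{(N)}_t := \SN_t - S_t$ and subtracting the two equations gives
\begin{equation*}
-dE^{(N)}_t = [\mathcal{D}^\dagger(\SN_t) - \mathcal{D}^\dagger(S_t)]\,dt + \tfrac{1}{\sqrt{N}}\,dM^{(N)}_t + \tfrac{1}{N}\,r^{(N)}_t\,dt.
\end{equation*}
Since $\mathcal{D}^\dagger$ is polynomial of degree two in its argument and $S_t$ is uniformly bounded on $[0,T]$, a local Lipschitz estimate $\|\mathcal{D}^\dagger(\SN_t) - \mathcal{D}^\dagger(S_t)\|_F \leq c_t\|E^{(N)}_t\|_F$ holds on the event that $\SN_t$ stays in a bounded neighborhood of $S_t$ and is uniformly bounded below (the lower bound is needed because $\mathcal{A}_t$ contains $(\SN_t)^{-1}$). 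Applying Itô to $\|E^{(N)}_t\|_F^2$ and taking expectation yields
\begin{equation*}
-\tfrac{d}{dt}\Expect[\|E^{(N)}_t\|_F^2] \leq c_t\,\Expect[\|E^{(N)}_t\|_F^2] + \tfrac{C}{N},
\end{equation*}
and Grönwall integrated backward from $T$, with terminal error $\Expect[\|E^{(N)}_T\|_F^2] = O(1/N)$ by Wishart concentration (valid since $Y^i_T \iid \normal(0,S_T)$ and $N \geq d+1$), delivers \eqref{eq:error-ST}.

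For the exponential rate in the average-cost setting, I would linearize $\mathcal{D}^\dagger$ at the stationary point $\Sbar$. Direct differentiation yields (for LQG) the Lyapunov operator $\delta \mapsto \widetilde{A}\delta + \delta\widetilde{A}^\tp$ with $\widetilde{A} := A + \Sbar C^\tp C$; classical DRE theory combined with Assumption \ref{assn:model} (controllability of $(A,B)$ and observability via $C^\tp C \succ 0$) implies $\widetilde{A}$ is Hurwitz with spectral abscissa $-\lambda < 0$, which is the constant appearing in \eqref{eq:error-Sinf}. Replacing $c_t$ in the Grönwall step by the contractive rate $-2\lambda$ --- modulo a vanishing perturbation controlled by $\|S_t - \Sbar\|$, which itself decays exponentially as $T-t \to \infty$ --- produces the two-term bound in \eqref{eq:error-Sinf}: the terminal contribution decays like $e^{-2\lambda(T-t)}$, while the integrated $C/N$ noise saturates at $C_2/N$. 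Bounds on $P^{(N)}_t$ follow from the identity $P^{(N)}_t - P_t = -P^{(N)}_t(\SN_t - S_t)P_t$ (with the additional $|\theta|^{-1}$ factor in LEQG) combined with uniform operator-norm bounds on $P^{(N)}_t$ and $P_t$.

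The main obstacle is the presence of $(\SN_t)^{-1}$ in $\mathcal{A}_t$: both the Lipschitz step and the conversion from $\SN_t$ to $P^{(N)}_t$ require a uniform high-probability lower bound on $\lambda_{\min}(\SN_t)$, which in turn necessitates a stopping-time truncation. The condition $N \geq d+1$ guarantees almost-sure invertibility at the terminal time via Wishart non-degeneracy, but propagating this lower bound along the backward flow while keeping the cost of the truncation event at $O(1/N)$ in the final expectation is the delicate piece of the argument.
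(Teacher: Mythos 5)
Your opening move coincides with the paper's: both reduce the problem to an It\^o equation for the empirical covariance whose drift is exactly the Riccati operator plus an $N^{-1/2}$ martingale (this is \eqref{eq:omegaN} in the paper, derived after reversing time). After that the routes diverge. For the $O(1/N)$ bounds \eqref{eq:error-S}, the paper does not run a Gr\"onwall argument at all: it verifies that $\Omega^{(N)}_t = \SN_{T-t}$ satisfies the same matrix Riccati diffusion as in \cite{delmoral-2019-matrix-ricc} and imports the uniform-in-time bound (and the $N\ge d+1$ condition) wholesale from that reference. For the exponential bound \eqref{eq:error-Sinf}, the paper uses a second-order semigroup perturbation identity \eqref{eq:integral-representation} (an Alekseev--Gr\"obner/It\^o expansion along the Riccati flow $\phi(t,Q)$), bounds the three resulting terms via It\^o isometry, and gets the rate $\lambda$ from Assumption~\ref{assumption:Riccati} --- an assumed \emph{uniform over all} $Q\in S^d_+$ exponential decay of the first and second Fr\'echet derivatives of the semigroup --- rather than from a linearization at the fixed point. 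The moment control that your argument needs is supplied there by the clean trace bound $\Expect[\trace(\Omega^{(N)}_t)]\le \trace(\Omega_t)$ from \cite[Theorem 5.2]{delmoral-2023}, which makes all constants time-uniform without any truncation.

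The gaps in your version are real and you have correctly located the main one. A Gr\"onwall step with a Lipschitz constant $c_t$ that depends on $\|\SN_t\|_F$ staying in a bounded set requires either a uniform moment bound on $\SN_t$ or a stopping-time truncation whose complementary event contributes only $O(1/N)$; you flag this as ``the delicate piece'' but do not supply it, and without it the finite-horizon constant is not controlled (and the naive Gr\"onwall constant $e^{\int c_s\,ds}$ would in any case blow up as $T\to\infty$, so it cannot yield the time-uniform $C_2$ in the average-cost bound --- your switch to the linearized contractive rate addresses the rate but the ``modulo a vanishing perturbation'' handling of the transient is exactly where the uniform derivative bounds of Assumption~\ref{assumption:Riccati} are doing nontrivial work in the paper). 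One point in your favor that you slightly overstate as an obstacle: the $(\SN_t)^{-1}$ term in $\mathcal{A}_t$ is designed so that it cancels in the evolution of the empirical covariance (contributing only the constant $\Sigma$ terms), so the Lipschitz estimate on $\mathcal{D}^\dagger$ needs only an upper bound on $\SN_t$; the lower bound on $\lambda_{\min}(\SN_t)$ is needed only for well-posedness of the particle system and for the conversion to $P^{(N)}_t$, which is where $N\ge d+1$ and the positivity preservation of the matrix Riccati diffusion (again from \cite{delmoral-2019-matrix-ricc}) enter. Your resolvent identity for passing from $\SN$ to $P^{(N)}$ is fine and is essentially what the paper leaves implicit.
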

\begin{proof}
These bounds are based on ~\cite{delmoral-2019-matrix-ricc}. The proof appears in Appendix \ref{app:delmoral}.
\end{proof}

Formula~\eqref{eq:error-Sinf} is important because $\lambda$ is the rate for learning the optimal solution.  The constant $\lambda$ is the spectral constant related to the exponential convergence of the solution of the DRE to the solution of the ARE \citep{ksivan}. As seen in the proof, the interaction term $\mathcal{A}$ is responsible for this propoerty. The formula is useful to see the relation between the simulation horizon $T$ and the error. For $\varepsilon > 0$, let $t=0$ in~\eqref{eq:error-Sinf}, $N>O(\frac{1}{\varepsilon^2})$ and $T>O(\log(\frac{1}{\varepsilon}))$, then error is smaller than $\varepsilon$.
The offline dual EnKF algorithm (Algorithm~\ref{alg:P} in Appendix \ref{app:alg}) presents a method to simulate the interacting particle system \eqref{eq:dual_enkf_intro} using only access to a simulator. 
For the numerical approximation of the SDE, a first order Euler-Maruyama method is used and may be replaced with a higher order method.

\subsection{Algorithm for approximating optimal control}

If the matrix $B$ is available, then the optimal control input at time $t$ is approximated as follows:
\[
U_t^{(N)} = \begin{cases} K_t^{(N)} X_t, \quad K_t^{(N)} := -R^{-1}B\tp P_t^{(N)}, \quad 0\leq t\leq T, \quad &\text{LQG, LEQG}\\
 \bar{K}^{(N)} X_t , \quad \bar{K}^{(N)} := -R^{-1}B\tp \bar{P}^{(N)}, \quad &\text{AVG}
  \end{cases}
\]

For the case where an explicit form of $B$ is not known, then the simulator is used to obtain an empirical approximation of the Q-function as follows:
\begin{definition}[Empirical Q-function] The empirical approximations are defined as
\begin{align}
  \mathcal{Q}^{(N)}(x,a;t,\stepsize) & := c(x,a)\stepsize   + x{\tp} P_t^{(N)}  \simulator(x,a;\stepsize),\quad 0\leq t\leq T, \;\;x\in\Re^d,\;a\in\Re^m \label{eq:empiricalQ}\\
  \bar{\mathcal{Q}}^{(N)}(x,a;\stepsize) & := c(x,a)\stepsize   + x{\tp} \bar{P}^{(N)} \simulator(x,a;\stepsize),\quad x\in\Re^d,\;a\in\Re^m \notag
\end{align}
\label{defn:q}
\end{definition}
Based on the empirical Q-function, the optimal control is given by
\[
U_t^{(N)}\stepsize = \begin{cases} \argmin_{a\in\Re^m} \E{\mathcal{Q}^{(N)}(X_t,a;t,\stepsize)|X_t},\quad 0\leq t\leq T, \quad &\text{LQG, LEQG}\\[10pt]
\argmin_{a\in\Re^m} \E{\bar{\mathcal{Q}}(X_t,a;\stepsize)|X_t},  \quad &\text{AVG}
\end{cases}
\]
The expectation on the right-hand side is necessary because the simulator is noisy.  A most straightforward implementation is to simply replace the expectation with a single sample---as one does in a stochastic gradient descent procedure.  With additional computational budget, the expectation is approximated through $N_e$ evaluations in a batch.

To evaluate the arg min, one may use a zero order optimization framework~\citep{bach-2016}.  A simpler algorithm is obtained by noting that, like the $Q$ function, the empirical $Q$ function is also a quadratic function of the state, of the form $\half a\tp R a + B\tp a + \varphi(x)$ where $\varphi(\cdot)$ is now a random function.  For the case when the number of control inputs $m$ is small, optimal control is approximated by evaluating the Q function for $a= R\inv e_i$ where $\{e_1,e_2,\hdots,e_m\}$ are basis vectors in $\Re^m$. 
Details of the procedure appear in Appendix \ref{app:errorU} where the resulting empirical approximation of the optimal gain is described and the following bound is shown:
\begin{align}
\Expect[\|\hat{K}^{(N)}_{t} -K_{t}\|_F^2] &\leq
\frac{C_6}{{N}} + \frac{n C_7}{{N_e\stepsize}},\quad 0\leq t\leq T \label{eq:error-U}
\end{align}
The online approximation of optimal control input is tabulated as Algorithm \ref{alg:EnKF} in Appendix \ref{app:alg} and an error analysis of the gain appears in Appendix \ref{app:errorU}.

\begin{remark}
An important observation needs to be made here. The result seems very surprising since $\tau$ appears in the denominator, which suggests that choosing a large simulation step size would result is better accuracy. This is happening because of a $\frac{W_\tau}{\tau}$ type term in the error analysis. What is hidden here, is that the error resulting from the discretization of the SDE has not been taken into account in the analysis, and it will yield a $O(\tau)$ type term in the error analysis. Thus the $O(\frac{1}{\tau}) + O(\tau)$ terms will lead to an optimal step size to choose.
\end{remark}

\subsection{Comparison of sample complexity to related works}

There are two types of errors for which analysis has been reported in recent literature: (i) the error in approximating the optimal value function; and (ii) the error in approximating the optimal gain matrix.  Most of these results are for the stationary average cost case in the stochastic setting of the problem or for the infinite-horizon linear quadratic regulator (LQR) in the deterministic ($\sigma=0$) setting. The quantitative comparisons with prior work are tabulated in Table \ref{tb:samp-comp}.  

\begin{table}[t]
\centering
\caption{Complexity bounds in terms of error $\varepsilon$.  These estimates are reported for the error in approximating gain in \cite[Theorem 4.3]{zhang-2021-neurips} and \cite[Theorem 2.2]{krauth-2019}; and the error in approximating the optimal cost in \cite[Lemma 6]{cassel-2021} and \cite[Theorem 4.3]{yang-2019-neurips}.}
\vspace{1em}
\begin{tabular}{cccc} \toprule
Algorithm & particles/samples & simulation time & iterations \\ \midrule
dual EnKF & $O(1/\varepsilon^2)$ & ${O}(1/\log(\varepsilon))$ & 1 \\ \midrule
\cite{zhang-2021-neurips} & $\tilde{O}(1/\varepsilon^4)$ & $O(1)$ & $O(1/\varepsilon)$ \\ \midrule
\cite{cassel-2021} & $\tilde{O}(1/\varepsilon^4)$ & $O(1)$ & $O(1/\varepsilon)$ \\ \midrule
\cite{krauth-2019} & 1 & $O(1/\varepsilon^2)$ & ${O}(1/\log(\varepsilon))$ \\ \midrule
\cite{yang-2019-neurips} & 1 & $O(1/\varepsilon^5)$ & ${O}(\log(1/\varepsilon))$ \\ \bottomrule
\end{tabular}
\label{tb:samp-comp}
\end{table}

In~\cite{krauth-2019}, an off policy method is used to estimate the Q function for discrete time average cost LQG. A linear function approximation is used with quadratic basis functions. The system is run for some fixed time using an exploration policy. At the end of each episode, the Q function is estimated using least squares. The error bounds in approximating the optimal gain are reported in~\cite[Theorem 2.2]{krauth-2019}. These results are closest to our work in terms of sample complexity requiring $O(\log(1/\varepsilon))$ training episodes and $O(1/\varepsilon^2)$ simulation time for error of $\varepsilon$ (see \cite[Theorem 2.2]{krauth-2019}). 

In \cite{yang-2019-neurips}, a policy gradient algorithm is described.  The actor is a gradient descent over the space of gains, where the policy gradient theorem is used to obtain the gradient. Error bounds are obtained for the error in value function \cite[Theorem 4.3]{yang-2019-neurips} which is related to error in solution of Riccati equation~\cite[Theorem 4.3]{yang-2019-neurips}.  The algorithm needs $O(\log(1/\varepsilon))$ iterations, and a simulation horizon of the order $O(1/\varepsilon)$ for an $\varepsilon$ error from the optimal value~\cite[Theorem 4.3]{yang-2019-neurips}.

In \cite{cassel-2021}, a zero order policy gradient algorithm is given for regret minimization in discrete time LQG. The idea is to perturb the gain in random directions to estimate the gradient of the value function with respect to the gain. Based on~\cite[Lemma 6]{cassel-2021}, $\tilde{O}(1/\varepsilon^4)$ samples are needed for gradient estimation and $O(1/\varepsilon)$ gradient descent iterations are needed for $\varepsilon$ error in approximating the optimal value.

On the LEQG problem, \cite{zhang-2021-neurips} extends the previous work of \cite{zhang-2020-l4dc}, \cite{zhang-2021-arxiv}, and studies model free policy gradient methods for finite-horizon discrete-time LEQG. The work utilizes the equivalence between LEQG and linear quadratic min-max game to describe a ``double-loop scheme''.  The approach is to write the optimization on the space of gains, and then apply a zeroth order policy optimization method to approximate the gradient flow.  A sample complexity analysis is given that quantifies the error bounds based on number of iterations and number of samples needed.  The algorithm requires $\tilde{O}(1/\varepsilon^4)$ samples to estimate the gradient, and $O(1/\varepsilon)$ of gradient descent iterations for $\varepsilon$ error in gain~\cite[Theorem 4.3]{zhang-2021-neurips}.

The trade-off between EnKF and policy gradient type or least-squares type algorithms is as follows. The latter class of methods typically require multiple iterations (episodes) for simulating a system over a finite time-horizon, albeit with a relatively smaller number of particles, while EnKF needs only a single iteration but with a larger number of particles. Notably, the work in \cite{yang-2019-neurips,krauth-2019,lai-2024,yaghmaie-2023} needs only a single copy of the system. The EnKF particles are simulated in parallel, giving rise to much more efficient and faster implementation. Moreover, EnKF does not require an initial feasible (stabilizing) gain, while \cite{krauth-2019}, \cite{yang-2019-neurips}, \cite{yaghmaie-2023}, \cite{cassel-2021}, \cite{zhang-2021-neurips}, \cite{cui-2023-l4dc}, \cite{lai-2024} need one.

\subsection{Conceptual comparison to path integral control}

While the focus of this paper is on designing the interaction term $\mathcal{A}$ in \eqref{eq:dyn-N} for the purpose of learning the value function, a related idea is the path integral approach(
 \cite{kappen-2005},
\cite{theodorou-2012},
\cite{kappen-2015}, 
\cite{williams-2016},
\cite{williams-2018}) 
In this class of algorithms, one works with the LQG problem under the assumption that $\Sigma = \lambda D$ for some $\lambda > 0$ (for simplicity we present formulas with $b = \sigma$ and $R = \identity$). One simulates multiple trajectories $\ud X_t^i = (AX_t^i + BU_t^i)\ud t + \ud W_t^i$, and approximates the value function as
\begin{align*}
\exp(-V(x,t)) &= \E{\exp\left(\int_s^T |CX_t|^2 \ud t -g^U(s,x)\right)} \approx \frac{1}{N}\sum_{i=1}^{N}w_i \\
U_s^*(x) - U_s &\approx \frac{\sum_{i=1}^{N} w_i W^{(i)}_{\tau}}{\tau \sum_{i=1}^{N} w_i}, \quad 
w_i := \exp\left(\int_s^T |CX^i_t|^2 \ud t-g^{U^i}(s,x)\right)
\end{align*}
%
%

with $g^U(s,T) := \int_s^T \half U_t \tp U_t dt  +U_t \tp dW_t$. 
%
%
%
%
The following are key differences between the two approaches: \textbf{(i)} In our approach, all particles have equal weight $1/N$. However, importance sampling is well known to suffer from particle degeneracy. The issue becomes severe in higher dimensions and is known as curse of dimensionality, which our algorithm avoids (see \cite{amir-cips} for a theoretical comparison).
\textbf{(ii)}
Another major point of distinction with is that the path integral control is a fully model-based algorithm  (formula (25) (where implementation of those terms assumes knowldge of $B$ and $\sigma$) and Section IV-B in \cite{williams-2016} and Section VI-B in \cite{williams-2018})) while we focus on a simulator-based setting through the design of interactions. Moreover, MPPI assumes a special relation between control cost and noise which we do not need (see for example, the equation above (24) in \cite{williams-2018} or \cite[equation (6)]{williams-2016}). \textbf{(iii)} A third major point is between how trajctories are utilised. Since MPPI needs to iteratively evaluate expectation under the controlled measure, they need to run multiple copies of the system in for each iteration, while we need to run them only once. Our algorithm has interaction between particles, while MPPI uses importance sampling based approaches. 

\begin{figure}[t]
		\centering
{
	\subfigure{
         \includegraphics[scale=0.2]{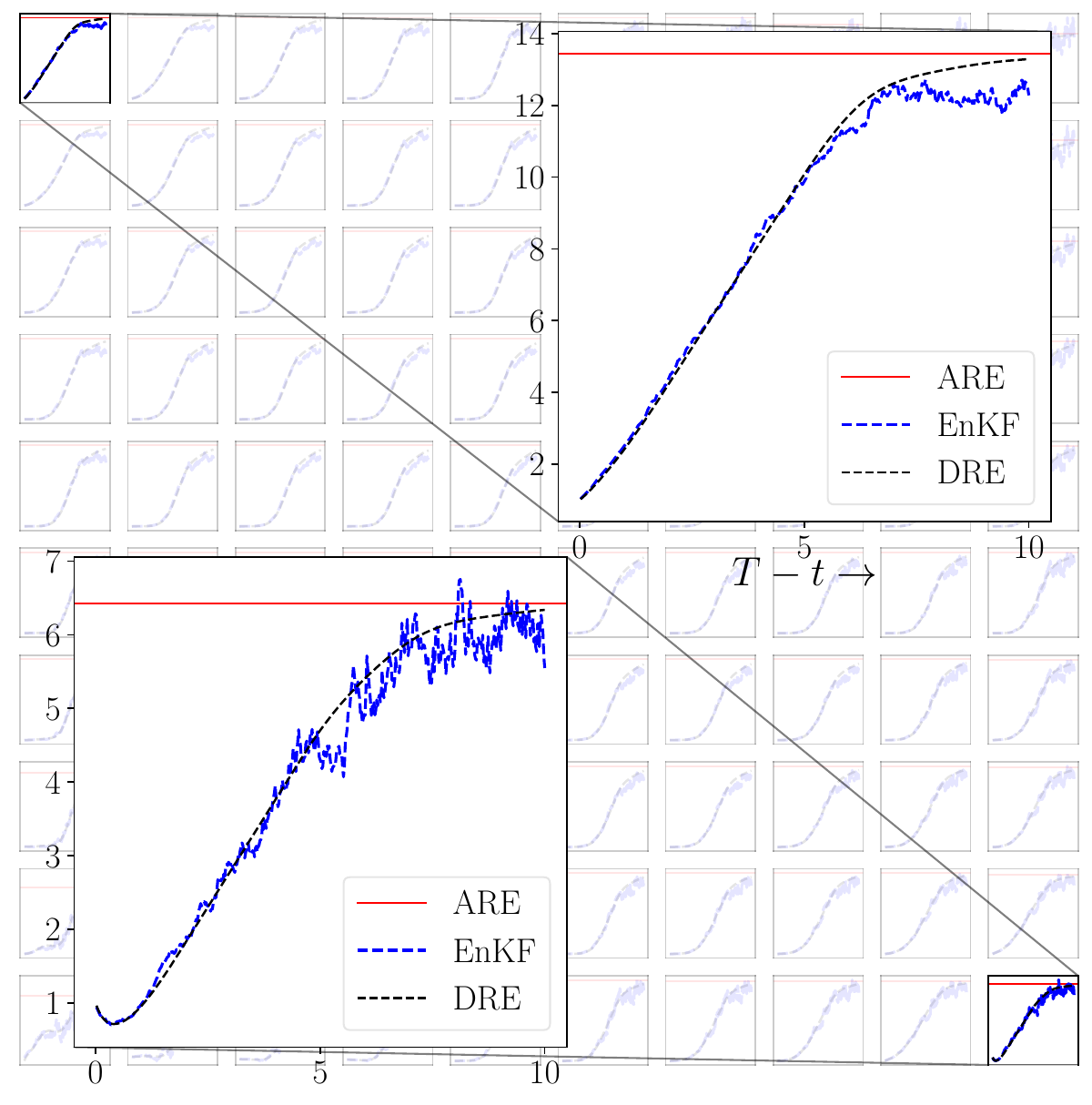}  
         \label{fig:LQG-learns}
	}
	\subfigure{
         \includegraphics[scale=0.2]{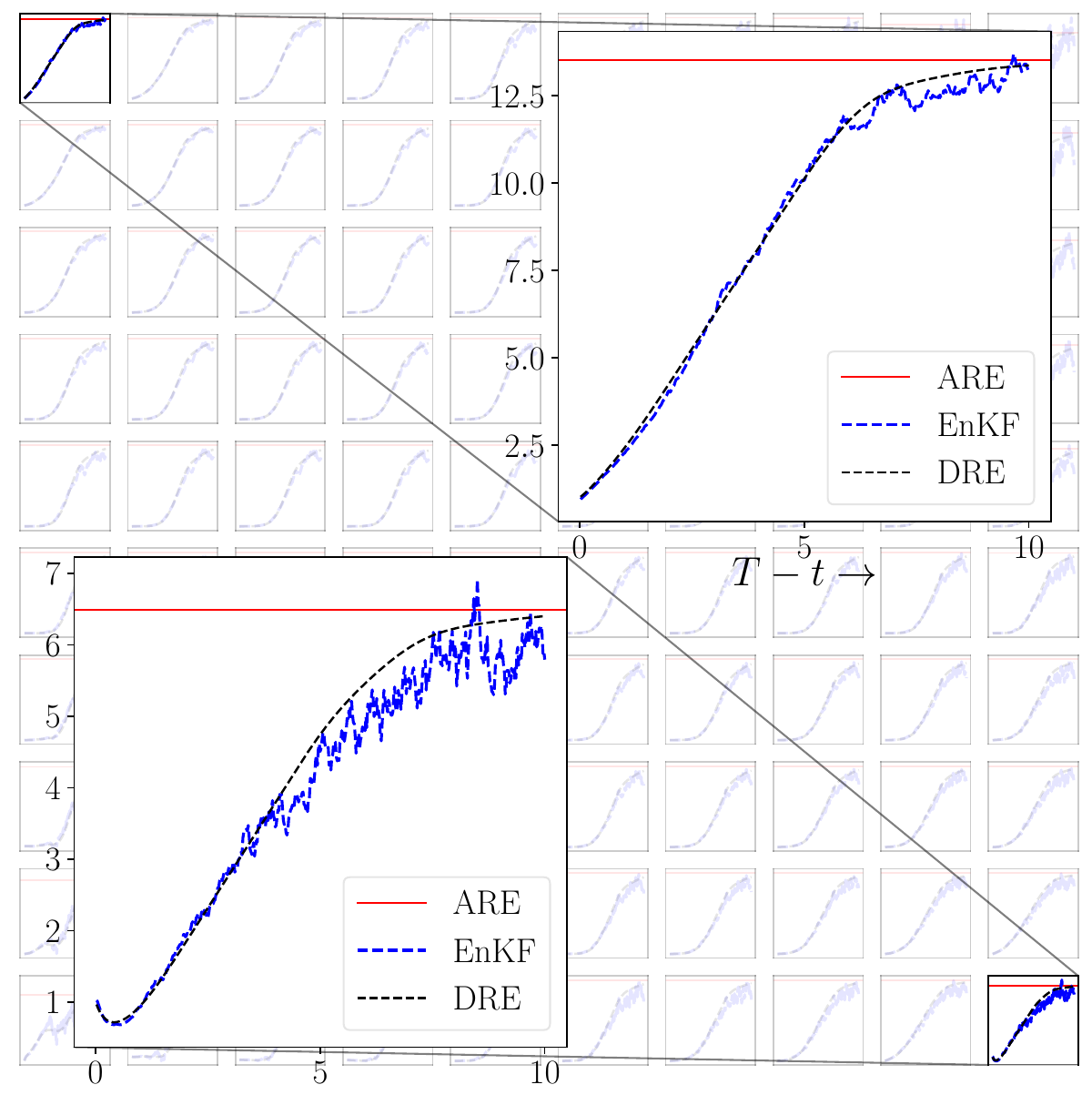}  
         \label{fig:LEQGP-learns}
     }
     \subfigure{
         \includegraphics[scale=0.2]{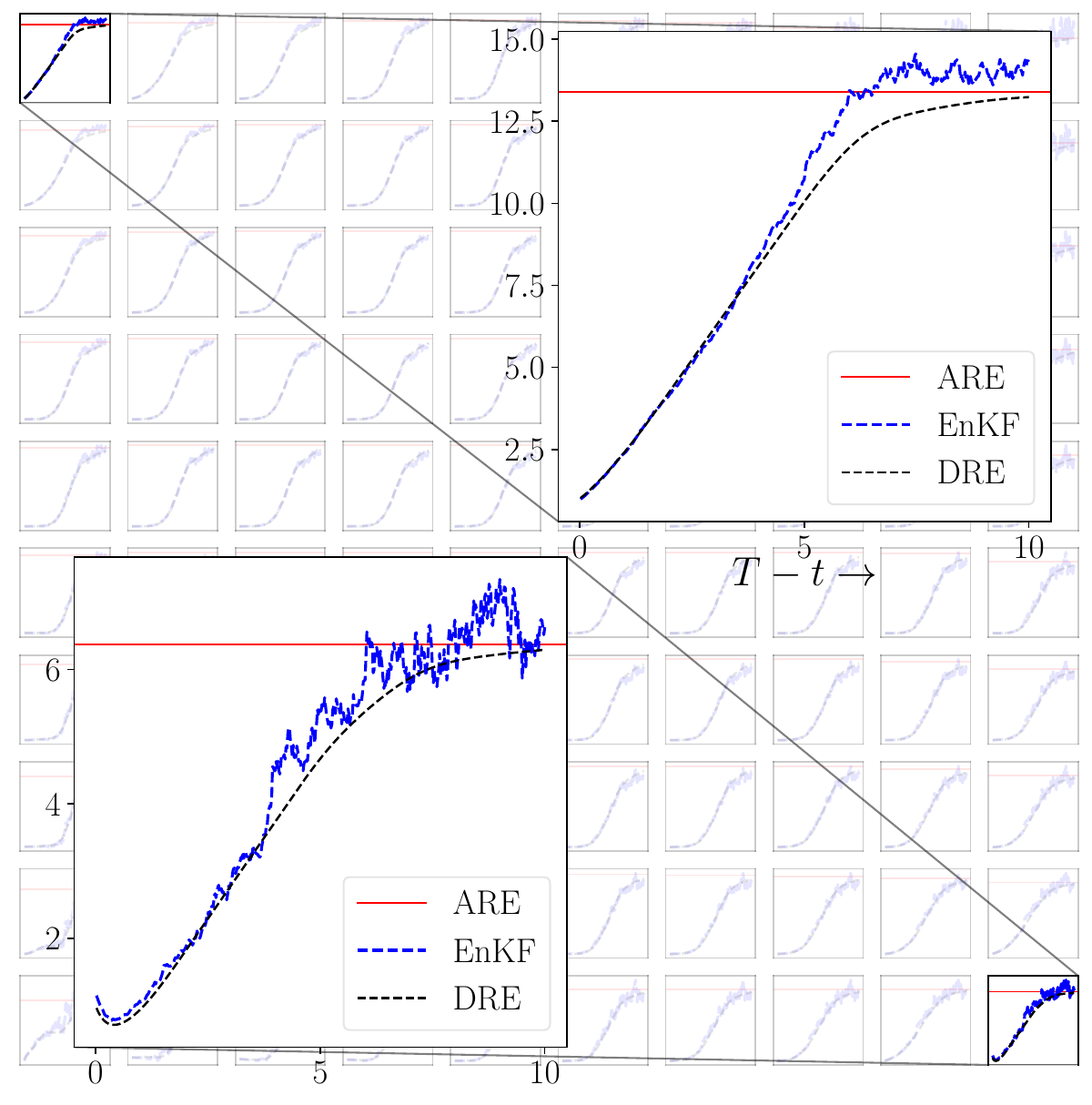}  
         \label{fig:LEQGN-learns}
     }
}
		\caption{Comparison of the numerical solutions obtained from the EnKF, the DRE, and the ARE. 
		The plots are in order: (a) LQG, (b) LEQG ($\theta > 0$) (c) LEQG ($\theta < 0$).}
		\label{fig:convergence}
		\vspace{-0.1in}
	\end{figure}

\begin{figure}
\centering
\includegraphics[scale=0.3]{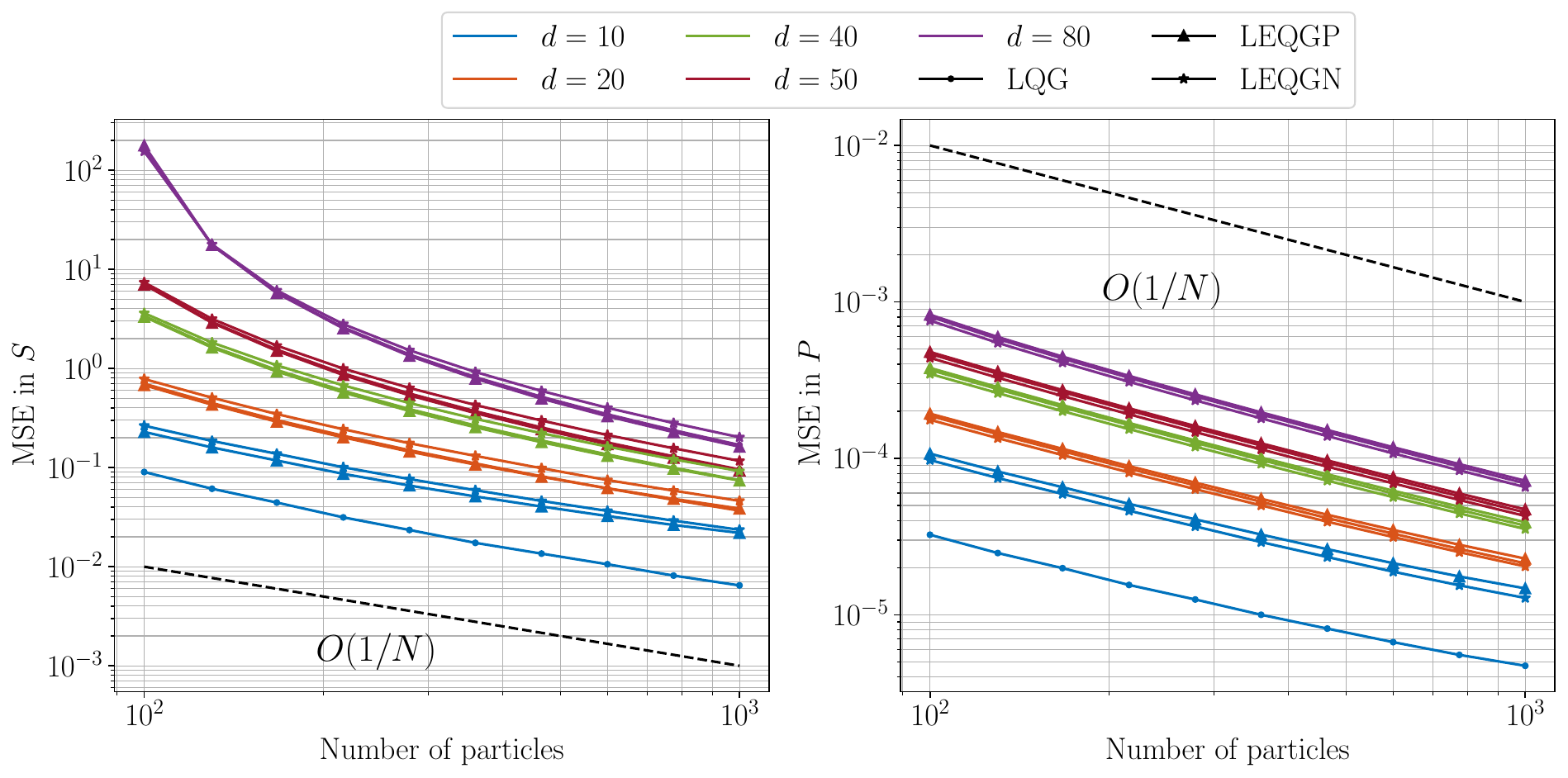}
\caption{Relative error in approximating the solution of the ARE by the dual EnKF.}
\label{fig:mse}
\end{figure}

\section{Numerical experiments and comparisons}
\label{sec:sim}


\subsection{Numerical illustration of error formulas~\eqref{eq:error-S},\eqref{eq:error-Sinf})}

An attractive feature of dual EnKF is that with large $N$, learning rate is inherited from the DRE convergence theory (see formula~\eqref{eq:error-Sinf}).  A numerical illustration of this formula, showing convergence of the $d^2$ entries of the $P$ matrix, is depicted in Figure ~\ref{fig:convergence}.  The model is $d=10$ dimensional where the entries of the $A$ matrix are randomly sampled (see Appendix \ref{app:addnumrandom} for details).  Five of the total ten eigenvalues of $A$ have positive real parts for the particular realization used in generating Figure~\ref{fig:convergence}.

In order to investigate scaling with increasing state dimension $d$, a spring mass damper model was introduced in \cite{mohammadi_global_2019}. For this model, all three controllers are evaluated (LQG and LEQG for $\theta$ positive and negative). 
The model and simulation parameters described in the Appendix \ref{app:smd}.
Figure~\ref{fig:mse} depicts the scaling as a function of $N$ for the following metrics:
$
 \frac{\Expect[\|\bar{S}^{(N)}-\bar{S}\|_F^2]}{\|\bar{S}\|_F^2} \text{ and } \frac{\Expect[\|\bar{P}^{(N)}-\bar{P}\|_F^2]}{\|\bar{P}\|_F^2}
 $
Consistent with \eqref{eq:error-S}, both the errors go down as $\frac{1}{N}$. Additional results on the performance of the optimal control law appear in the Appendix \ref{app:smd}.

\subsection{Numerical comparisons with prior work} 
\label{sec:comp}
\textbf{Policy optimization:}
For this study, a three dimensional discrete-time system from \cite{zhang-2021-neurips} is considered.  For this model, comparisons are made with the following: (i) Finite-horizon LEQG in \cite{zhang-2021-neurips}, denoted [Z21]; and (ii) Average cost LQG in \cite{krauth-2019}, denoted [K19].  
Comparison is done for  relative error in approximation of the optimal gain (with respect to the optimal gain)  and relative error in the cost incurred by the control (with respect to the optimal cost)  given by the algorithm.
Figure \ref{fig:comparison} depicts the numerically computed relationship between the relative error and the computational time.
 
\textbf{Path integral control:} We compare relative error in the cost incurred by the control (with respect to the optimal cost) given by the path integral approach \citep{williams-2018}, on the spring mass damper system for various dimensions of the state in Figure \ref{fig:mppi}.

 For each algorithm, the error becomes smaller with increasing computational time.  For the dual EnKF, this tradeoff is obtained by increasing the number of particles.  For [Z21] and [K19], the tradeoff is obtained by increasing the number of iterations and the time horizon. 
We observe that EnKF needs simulation times which are at least an order of magnitude lower than the other algorithms. 
See Appendix \ref{app:numcomp} for additional information on the optimal control problem and the simulation parameters, and additional discussion on these studies.

\begin{figure}[t]
  \centering
{
  \subfigure{
    \includegraphics[scale=0.3]{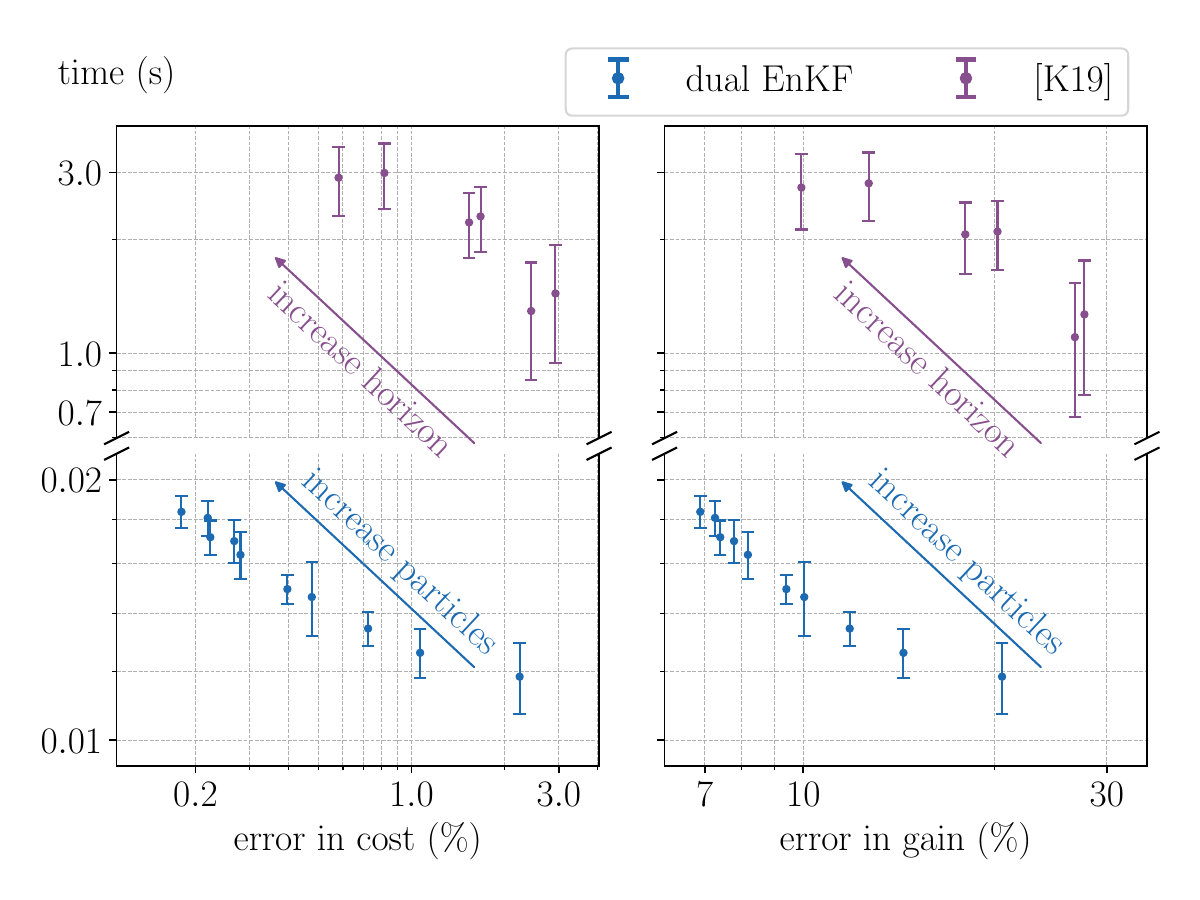}
    \label{fig:lqg_comparison}
  }
  \subfigure{
    \includegraphics[scale=0.3]{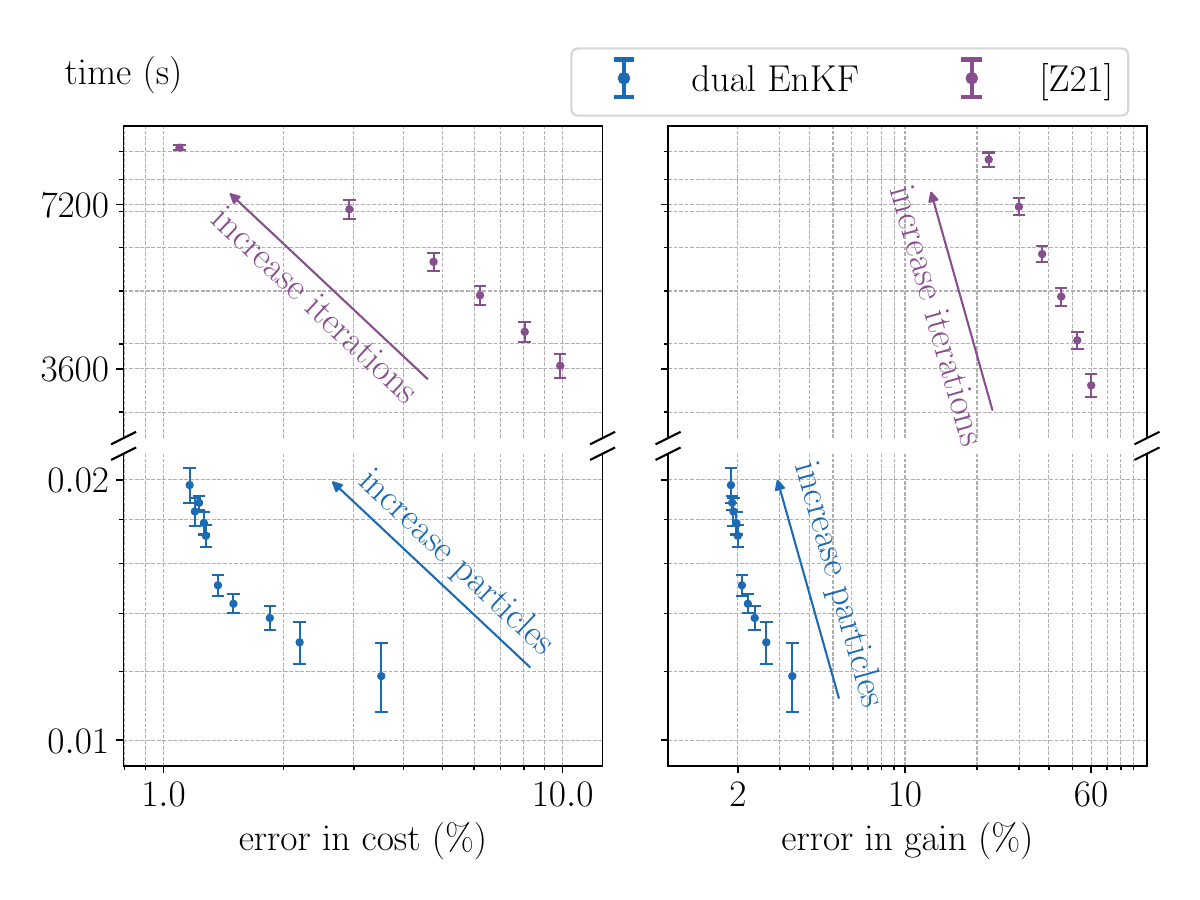}
    \label{fig:leqg_comparison}
  }
}
  \caption{
Comparison of dual EnKF with : (a) [K19] for  infinite horizon LQG; and (b) [Z21] for  finite horizon LEQG. See Section \ref{sec:comp} for details.
  }
  \label{fig:comparison}
\end{figure}

\begin{figure}
\centering
\includegraphics[scale=0.35]{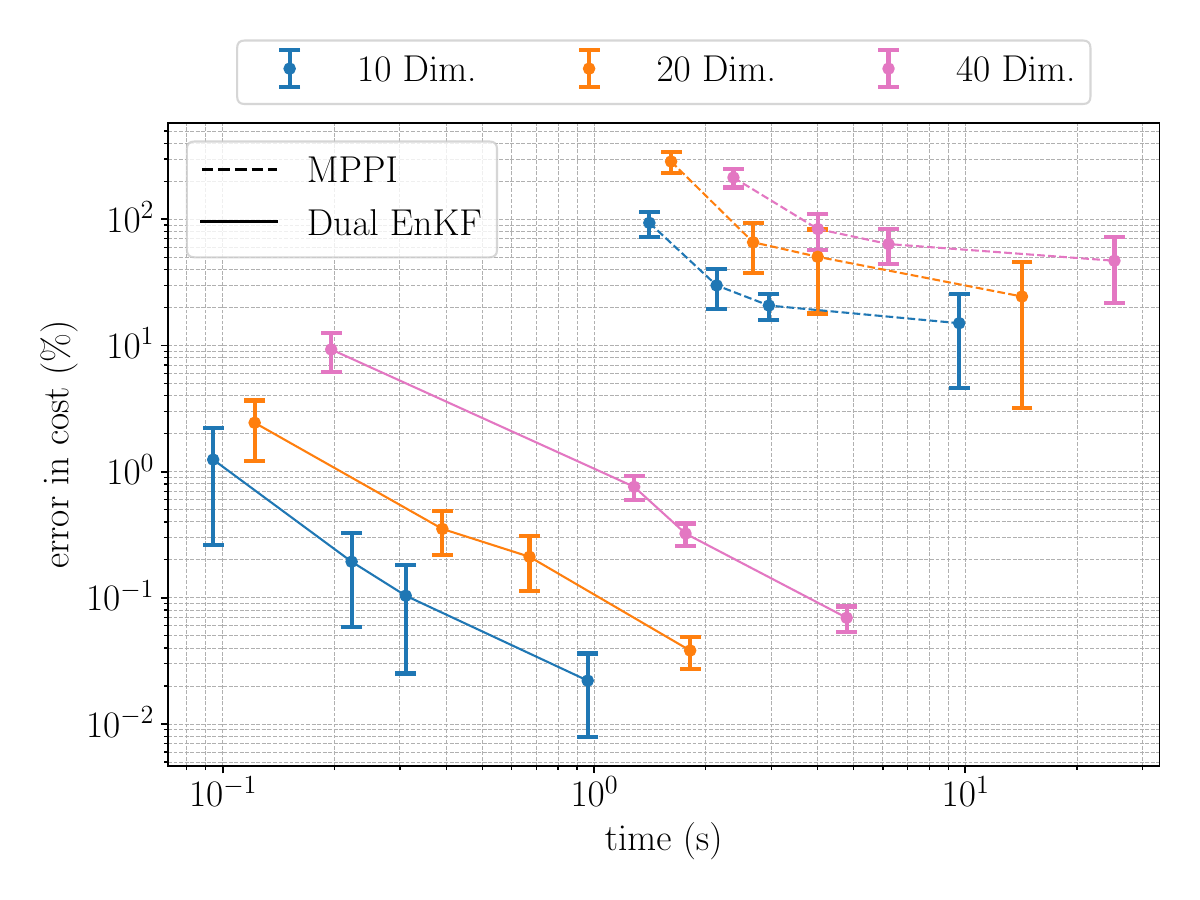}
\caption{Comparison of dual EnKF with path integral control for spring mass damper system.}
\label{fig:mppi}
\end{figure}

\newpage
\bibliography{refs,litsur,literature}

\begin{thebibliography}{34}
\providecommand{\natexlab}[1]{#1}
\providecommand{\url}[1]{\texttt{#1}}
\expandafter\ifx\csname urlstyle\endcsname\relax
  \providecommand{\doi}[1]{doi: #1}\else
  \providecommand{\doi}{doi: \begingroup \urlstyle{rm}\Url}\fi

\bibitem[kzc()]{kzcodes}
\url{https://github.com/wujiduan/Zero-sum-LQ-Games}.

\bibitem[Abbasi-Yadkori et~al.(2019)Abbasi-Yadkori, Lazic, and
  Szepesvari]{yadkori-2019}
Yasin Abbasi-Yadkori, Nevena Lazic, and Csaba Szepesvari.
\newblock Model-free linear quadratic control via reduction to expert
  prediction.
\newblock In Kamalika Chaudhuri and Masashi Sugiyama, editors,
  \emph{Proceedings of the Twenty-Second International Conference on Artificial
  Intelligence and Statistics}, volume~89 of \emph{Proceedings of Machine
  Learning Research}, pages 3108--3117. PMLR, 16--18 Apr 2019.
\newblock URL \url{https://proceedings.mlr.press/v89/abbasi-yadkori19a.html}.

\bibitem[Bach and Perchet(2016)]{bach-2016}
Francis Bach and Vianney Perchet.
\newblock Highly-smooth zero-th order online optimization.
\newblock In Vitaly Feldman, Alexander Rakhlin, and Ohad Shamir, editors,
  \emph{29th Annual Conference on Learning Theory}, volume~49 of
  \emph{Proceedings of Machine Learning Research}, pages 257--283, Columbia
  University, New York, New York, USA, 23--26 Jun 2016. PMLR.
\newblock URL \url{https://proceedings.mlr.press/v49/bach16.html}.

\bibitem[Basei et~al.(2022)Basei, Guo, Hu, and Zhang]{basei-2022}
Matteo Basei, Xin Guo, Anran Hu, and Yufei Zhang.
\newblock Logarithmic regret for episodic continuous-time linear-quadratic
  reinforcement learning over a finite-time horizon.
\newblock \emph{Journal of Machine Learning Research}, 23\penalty0
  (178):\penalty0 1--34, 2022.
\newblock URL \url{http://jmlr.org/papers/v23/20-664.html}.

\bibitem[Bishop and Del~Moral(2017)]{bishop-2017}
Adrian~N. Bishop and Pierre Del~Moral.
\newblock On the stability of kalman--bucy diffusion processes.
\newblock \emph{SIAM Journal on Control and Optimization}, 55\penalty0
  (6):\penalty0 4015--4047, 2017.
\newblock \doi{10.1137/16M1102707}.
\newblock URL \url{https://doi.org/10.1137/16M1102707}.

\bibitem[Bishop and Del~Moral(2023)]{delmoral-2023}
Adrian~N Bishop and Pierre Del~Moral.
\newblock On the mathematical theory of ensemble (linear-gaussian) kalman--bucy
  filtering.
\newblock \emph{Mathematics of Control, Signals, and Systems}, 35\penalty0
  (4):\penalty0 835--903, 2023.

\bibitem[Bishop and Moral(2019)]{delmoral-2019-matrix-ricc}
Adrian~N. Bishop and Pierre~Del Moral.
\newblock {On the stability of matrix-valued Riccati diffusions}.
\newblock \emph{Electronic Journal of Probability}, 24\penalty0
  (none):\penalty0 1 -- 40, 2019.
\newblock \doi{10.1214/19-EJP342}.
\newblock URL \url{https://doi.org/10.1214/19-EJP342}.

\bibitem[Brockett(2015)]{brockett2015finite}
R.~W Brockett.
\newblock \emph{Finite dimensional linear systems}.
\newblock SIAM, 2015.

\bibitem[Cassel and Koren(2021)]{cassel-2021}
Asaf~B Cassel and Tomer Koren.
\newblock Online policy gradient for model free learning of linear quadratic
  regulators with $\sqrt{}$t regret.
\newblock In Marina Meila and Tong Zhang, editors, \emph{Proceedings of the
  38th International Conference on Machine Learning}, volume 139 of
  \emph{Proceedings of Machine Learning Research}, pages 1304--1313. PMLR,
  18--24 Jul 2021.
\newblock URL \url{https://proceedings.mlr.press/v139/cassel21a.html}.

\bibitem[Cui et~al.(2023)Cui, Basar, and Jiang]{cui-2023-l4dc}
Leilei Cui, Tamer Basar, and Zhong-Ping Jiang.
\newblock A reinforcement learning look at risk-sensitive linear quadratic
  gaussian control.
\newblock In Nikolai Matni, Manfred Morari, and George~J. Pappas, editors,
  \emph{Proceedings of The 5th Annual Learning for Dynamics and Control
  Conference}, volume 211 of \emph{Proceedings of Machine Learning Research},
  pages 534--546. PMLR, 15--16 Jun 2023.
\newblock URL \url{https://proceedings.mlr.press/v211/cui23c.html}.

\bibitem[Davis(1977)]{davis}
M.~H.~A. Davis.
\newblock \emph{Linear estimation and stochastic control}.
\newblock Chapman and Hall mathematics series. Chapman and Hall, London, 1977.
\newblock ISBN 0470992158.

\bibitem[Fleming and Soner(2006)]{fleming-2006-book}
Wendell~H. Fleming and H.M. Soner.
\newblock \emph{Controlled Markov Processes and Viscosity Solutions}.
\newblock Stochastic Modelling and Applied Probability. Springer New York, NY,
  2 edition, 2006.
\newblock ISBN 978-0-387-31071-8.
\newblock \doi{https://doi.org/10.1007/0-387-31071-1}.

\bibitem[Hern\'{a}ndez-Hern\'{a}ndez and
  Salazar-S\'{a}nchez(2023)]{hernandez-2023}
Daniel Hern\'{a}ndez-Hern\'{a}ndez and Pedro Salazar-S\'{a}nchez.
\newblock Risk-sensitive lqg discounted control problems and their asymptotic
  behavior.
\newblock \emph{SIAM Journal on Control and Optimization}, 61\penalty0
  (3):\penalty0 1136--1161, 2023.
\newblock \doi{10.1137/21M1459253}.
\newblock URL \url{https://doi.org/10.1137/21M1459253}.

\bibitem[Joshi et~al.(2022)Joshi, Taghvaei, Mehta, and Meyn]{anant-2022}
A.~A. Joshi, A.~Taghvaei, P.~G. Mehta, and S.~P. Meyn.
\newblock Controlled interacting particle algorithms for simulation-based
  reinforcement learning.
\newblock \emph{Systems \& Control Letters}, 170:\penalty0 105392, 2022.
\newblock ISSN 0167-6911.
\newblock \doi{https://doi.org/10.1016/j.sysconle.2022.105392}.
\newblock URL
  \url{https://www.sciencedirect.com/science/article/pii/S0167691122001694}.

\bibitem[Kappen(2005)]{kappen-2005}
Hilbert~J. Kappen.
\newblock Linear theory for control of nonlinear stochastic systems.
\newblock \emph{Phys. Rev. Lett.}, 95:\penalty0 200201, Nov 2005.
\newblock \doi{10.1103/PhysRevLett.95.200201}.
\newblock URL \url{https://link.aps.org/doi/10.1103/PhysRevLett.95.200201}.

\bibitem[Krauth et~al.(2019)Krauth, Tu, and Recht]{krauth-2019}
Karl Krauth, Stephen Tu, and Benjamin Recht.
\newblock Finite-time analysis of approximate policy iteration for the linear
  quadratic regulator.
\newblock In H.~Wallach, H.~Larochelle, A.~Beygelzimer, F.~d'~Alch\'{e}-Buc,
  E.~Fox, and R.~Garnett, editors, \emph{Advances in Neural Information
  Processing Systems}, volume~32. Curran Associates, Inc., 2019.
\newblock URL
  \url{https://proceedings.neurips.cc/paper_files/paper/2019/file/aaebdb8bb6b0e73f6c3c54a0ab0c6415-Paper.pdf}.

\bibitem[Kwakernaak and Sivan(1972)]{ksivan}
Huibert Kwakernaak and Raphael Sivan.
\newblock \emph{Linear optimal control systems}.
\newblock Wiley Interscience, New York, 1972.
\newblock ISBN 0471511102.

\bibitem[Lai and Xiong(2024)]{lai-2024}
Jing Lai and Junlin Xiong.
\newblock Reinforcement learning for linear exponential quadratic gaussian
  problem.
\newblock \emph{Systems \& Control Letters}, 185:\penalty0 105749, 2024.
\newblock ISSN 0167-6911.
\newblock \doi{https://doi.org/10.1016/j.sysconle.2024.105749}.
\newblock URL
  \url{https://www.sciencedirect.com/science/article/pii/S0167691124000379}.

\bibitem[Liberzon(2012)]{liberzon}
Daniel Liberzon.
\newblock \emph{Calculus of Variations and Optimal Control Theory}.
\newblock Princeton University Press, Princeton, NJ, 2012.
\newblock ISBN 978-0-691-15187-8.
\newblock A concise introduction.

\bibitem[Mohammadi et~al.(2019)Mohammadi, Zare, Soltanolkotabi, and
  Jovanovic]{mohammadi_global_2019}
H.~Mohammadi, A.~Zare, M.~Soltanolkotabi, and M.~R. Jovanovic.
\newblock Global exponential convergence of gradient methods over the nonconvex
  landscape of the linear quadratic regulator.
\newblock In \emph{2019 {IEEE} 58th {Conference} on {Decision} and {Control}
  ({CDC})}, pages 7474--7479, December 2019.
\newblock \doi{10.1109/CDC40024.2019.9029985}.
\newblock ISSN: 2576-2370.

\bibitem[Nagai(2013)]{nagai-2013}
Hideo Nagai.
\newblock Risk-sensitive stochastic control.
\newblock In John Baillieul and Tariq Samad, editors, \emph{Encyclopedia of
  Systems and Control}, pages 1--9. Springer London, London, 2013.
\newblock ISBN 978-1-4471-5102-9.
\newblock \doi{10.1007/978-1-4471-5102-9_233-1}.
\newblock URL \url{https://doi.org/10.1007/978-1-4471-5102-9_233-1}.

\bibitem[Roulet et~al.(2020)Roulet, Fazel, Srinivasa, and
  Harchaoui]{roulet-2020-acc}
Vincent Roulet, Maryam Fazel, Siddhartha Srinivasa, and Zaid Harchaoui.
\newblock On the convergence of the iterative linear exponential quadratic
  gaussian algorithm to stationary points.
\newblock In \emph{2020 American Control Conference (ACC)}, pages 132--137,
  2020.
\newblock \doi{10.23919/ACC45564.2020.9147694}.

\bibitem[Taghvaei and Mehta(2023)]{amir-cips}
Amirhossein Taghvaei and Prashant~G. Mehta.
\newblock A survey of feedback particle filter and related controlled
  interacting particle systems (cips).
\newblock \emph{Annual Reviews in Control}, 55:\penalty0 356--378, 2023.
\newblock ISSN 1367-5788.
\newblock \doi{https://doi.org/10.1016/j.arcontrol.2023.03.006}.
\newblock URL
  \url{https://www.sciencedirect.com/science/article/pii/S136757882300010X}.

\bibitem[Theodorou and Todorov(2012)]{theodorou-2012}
Evangelos~A. Theodorou and Emanuel Todorov.
\newblock Relative entropy and free energy dualities: Connections to path
  integral and kl control.
\newblock In \emph{2012 IEEE 51st IEEE Conference on Decision and Control
  (CDC)}, pages 1466--1473, 2012.
\newblock \doi{10.1109/CDC.2012.6426381}.

\bibitem[Thijssen and Kappen(2015)]{kappen-2015}
Sep Thijssen and H.~J. Kappen.
\newblock Path integral control and state-dependent feedback.
\newblock \emph{Phys. Rev. E}, 91:\penalty0 032104, Mar 2015.
\newblock \doi{10.1103/PhysRevE.91.032104}.
\newblock URL \url{https://link.aps.org/doi/10.1103/PhysRevE.91.032104}.

\bibitem[Williams et~al.(2016)Williams, Drews, Goldfain, Rehg, and
  Theodorou]{williams-2016}
Grady Williams, Paul Drews, Brian Goldfain, James~M. Rehg, and Evangelos~A.
  Theodorou.
\newblock Aggressive driving with model predictive path integral control.
\newblock In \emph{2016 IEEE International Conference on Robotics and
  Automation (ICRA)}, pages 1433--1440, 2016.
\newblock \doi{10.1109/ICRA.2016.7487277}.

\bibitem[Williams et~al.(2018)Williams, Drews, Goldfain, Rehg, and
  Theodorou]{williams-2018}
Grady Williams, Paul Drews, Brian Goldfain, James~M. Rehg, and Evangelos~A.
  Theodorou.
\newblock Information-theoretic model predictive control: Theory and
  applications to autonomous driving.
\newblock \emph{IEEE Transactions on Robotics}, 34\penalty0 (6):\penalty0
  1603--1622, 2018.
\newblock \doi{10.1109/TRO.2018.2865891}.

\bibitem[Williams(1992)]{williams-1992}
Ronald~J. Williams.
\newblock Simple statistical gradient-following algorithms for connectionist
  reinforcement learning.
\newblock \emph{Machine Learning}, 8\penalty0 (3):\penalty0 229--256, May 1992.
\newblock ISSN 1573-0565.
\newblock \doi{10.1007/BF00992696}.
\newblock URL \url{https://doi.org/10.1007/BF00992696}.

\bibitem[Wu et~al.()Wu, Barakat, Fatkhullin, and He]{kz-codes-paper}
Jiduan Wu, Anas Barakat, Ilyas Fatkhullin, and Niao He.
\newblock Learning zero-sum linear quadratic games with improved sample
  complexity and last-iterate convergence.
\newblock URL \url{https://arxiv.org/abs/2309.04272}.
\newblock Oct 2023. arXiv:2309.04272.

\bibitem[Yaghmaie et~al.(2023)Yaghmaie, Gustafsson, and Ljung]{yaghmaie-2023}
Farnaz~Adib Yaghmaie, Fredrik Gustafsson, and Lennart Ljung.
\newblock Linear quadratic control using model-free reinforcement learning.
\newblock \emph{IEEE Transactions on Automatic Control}, 68\penalty0
  (2):\penalty0 737--752, 2023.
\newblock \doi{10.1109/TAC.2022.3145632}.

\bibitem[Yang et~al.(2019)Yang, Chen, Hong, and Wang]{yang-2019-neurips}
Zhuoran Yang, Yongxin Chen, Mingyi Hong, and Zhaoran Wang.
\newblock Provably global convergence of actor-critic: A case for linear
  quadratic regulator with ergodic cost.
\newblock In H.~Wallach, H.~Larochelle, A.~Beygelzimer, F.~d'~Alch\'{e}-Buc,
  E.~Fox, and R.~Garnett, editors, \emph{Advances in Neural Information
  Processing Systems}, volume~32. Curran Associates, Inc., 2019.
\newblock URL
  \url{https://proceedings.neurips.cc/paper_files/paper/2019/file/9713faa264b94e2bf346a1bb52587fd8-Paper.pdf}.

\bibitem[Zhang et~al.(2020)Zhang, Hu, and Basar]{zhang-2020-l4dc}
Kaiqing Zhang, Bin Hu, and Tamer Basar.
\newblock Policy {Optimization} for \${\textbackslash}mathcal\{{H}\}\_2\$
  {Linear} {Control} with
  \${\textbackslash}mathcal\{{H}\}\_{\textbackslash}infty\$ {Robustness}
  {Guarantee}: {Implicit} {Regularization} and {Global} {Convergence}.
\newblock In \emph{Proceedings of the 2nd {Conference} on {Learning} for
  {Dynamics} and {Control}}, pages 179--190. PMLR, July 2020.
\newblock URL \url{https://proceedings.mlr.press/v120/zhang20a.html}.
\newblock ISSN: 2640-3498.

\bibitem[Zhang et~al.(2021{\natexlab{a}})Zhang, Hu, and
  Başar]{zhang-2021-arxiv}
Kaiqing Zhang, Bin Hu, and Tamer Başar.
\newblock Policy {Optimization} for \${\textbackslash}mathcal\{{H}\}\_2\$
  {Linear} {Control} with
  \${\textbackslash}mathcal\{{H}\}\_{\textbackslash}infty\$ {Robustness}
  {Guarantee}: {Implicit} {Regularization} and {Global} {Convergence}, February
  2021{\natexlab{a}}.
\newblock URL \url{http://arxiv.org/abs/1910.09496}.
\newblock arXiv:1910.09496 [cs, eess, math].

\bibitem[Zhang et~al.(2021{\natexlab{b}})Zhang, Zhang, Hu, and
  Basar]{zhang-2021-neurips}
Kaiqing Zhang, Xiangyuan Zhang, Bin Hu, and Tamer Basar.
\newblock Derivative-{Free} {Policy} {Optimization} for {Linear}
  {Risk}-{Sensitive} and {Robust} {Control} {Design}: {Implicit}
  {Regularization} and {Sample} {Complexity}.
\newblock In \emph{Advances in {Neural} {Information} {Processing} {Systems}},
  volume~34, pages 2949--2964. Curran Associates, Inc., 2021{\natexlab{b}}.
\newblock URL
  \url{https://proceedings.neurips.cc/paper/2021/hash/1714726c817af50457d810aae9d27a2e-Abstract.html}.

\end{thebibliography}

\appendix
\newpage

\section{Theory}
\label{app:theory}

\subsection{Simulator}

We assume access to a simulator but not explicit knowledge of the system matrices $A,B$. The simulator-based setting is  widely applicable in RL, for instance, in the ATARI games, open AI gym, MuJoCo and other such environments. In these settings, the transition probabilites are not explicitly known but one is allowed to sample the trajectory by running a simulation. In many scenarios, e.g., tic-tac-toe or chess or the game of go, transition probabilities are exactly known (these are used to construct the game). However, it is a common RL practice to approximately solve the dynamic programming equation only using simulations (having the computer play the game). We are doing something similar for LQ optimal control problem.

For the linear quadratic problems, previously published works by various research groups have also focused on the simulation-based setting (Remark \ref{rmk:sim}). The distinction and the advantage of our approach is that careful design of interactions between simulations is shown to drastically improve computational time.
The original idea of our work is that, at least for LQ problems, substantial gains can be made by coupling the simulations. For example, standard ARE solvers (in scipy) fail for large dimensions ($d = 200$) but an EnKF can still give a solution. EnKF is the preferred algorithm in data assimilation applications where the state-space is truly large.

Our approach is very relevant in the dynamical systems community where there many instances where one has a high fidelity simulator of the system, but obtaining and analytically solving an exact model is very hard, for instance, weather systems, fluid or aerodynamic simulations, and our approach is the first step to providing controllers for such systems using interacting particle systems, without the explicit use of models. 

There is however, a concern about robustness. If the real plant differs from simulations then one would need some online strategy to estimate and correct for such errors.

\subsection{Log transform}
The value function $\{v_t(x): 0 \leq t\leq
T, x\in\Re^d\}$ is defined as follows
\begin{align*}
v_t(x) & := \min_{U(\cdot) \in \mathbb{U}} \{ J_T(U) - J_t(U) \} \\
\text{s.t. } & \quad  \eqref{eq:dyn} \quad \text{ and } \quad X_0 = x.
\end{align*} 

Taking inspiration from literature for using log transform risk sensitive control \cite[Chapter 6]{fleming-2006-book}, we define a map $\bij : \Re
\to \Re$ so that 
\begin{align}\label{eq:pt}
p_t (x) : = \frac{\bij(v_t(x))}{\int \bij(v_t(x)) \ud x}, \quad 0\leq t\leq T,  \;x \in \Re^d
\end{align}
is a valid probability density function.  
The bijection $\bij$ is selected as
\begin{align}
\bij(z) \coloneqq 
\begin{cases}
\exp(-z); & \text{LQG} \\
\exp(-|\theta| z); & \text{LEQG}
\end{cases}.
\label{eq:bij}
\end{align}
Due to the quadratic nature of the value function, $p_t$ is the Gaussian density $\normal(0,S_t)$. The idea is to approximate $p_t$ using an ensemble of simulations, then obtain $P_t$ to find the optimal control. 

\subsection{Mean field system}

Define a stochastic process $\Ybarbar=\{\Ybarbar_t\in \Re^d:
0\leq t\leq T\}$ as a solution of the
following backward (in time) SDE:
\begin{subequations}\label{eq:Ybar}
\begin{align}
	\ud \Ybarbar_t & = A\Ybarbar_t\ud t  +  B\ud  \backward{\etabar}_t +  \sigma\ud  \backward{W}_t  +( \vf_t({Y}_t; \bar{n}_t, \bar{S}_t) + \correct_t({Y}_t; \bar{n}_t, \bar{S}_t)) \ud t      \\ 
	\Ybarbar_T & \sim \normal(0,S_T)
\end{align}
\end{subequations}
where ${\etabar}=\{{\etabar}_t \in \Re^m:0\leq t\leq T\}$ is a B.M. 
with a suitably chosen covariance matrix, $\vf_t(\cdot;\cdot), \correct_t(\cdot;\cdot)$ is a suitably chosen vector field, 
and $\bar{p}_t$ is the density of $\Ybarbar$.
\begin{proposition}
\label{prop:Y-exactness}
Consider the mean-field process~\eqref{eq:Ybar}.
Suppose $\cov(\eta)$,  $\vf$ and $\correct$ is selected according to Table \ref{tb:soln-lq}.    
Then, 
$
\bar{p}_t = p_t,\quad \forall t \in [0,T], 
$
where $\pbart$ is the probability density function of $\Ybarbar_t$ and $p_t$ is defined in~\eqref{eq:pt} in terms of the value function. The optimal control is expressed as a function of $\pbart$ according to
\begin{align*}
U_t^* = 
\begin{cases}
R\inv B\tp \grad\log\bar{p}_t(X_t); & \text{LQG} \\
(|\theta|R)\inv B\tp \grad\log\bar{p}_t(X_t); & \text{LEQG} \\
\end{cases}
\end{align*}
\end{proposition}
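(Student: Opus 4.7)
The plan is to verify that the density $\bar{p}_t$ of $\bar{Y}_t$ defined by the backward SDE~\eqref{eq:Ybar} satisfies exactly the same evolution equation as $p_t$ from~\eqref{eq:pt}, and then invoke uniqueness together with the matching terminal condition to conclude $\bar{p}_t = p_t$ for all $t \in [0,T]$. I would work entirely at the level of the associated PDEs rather than try to couple the two processes path-wise.

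First I would derive the PDE satisfied by $p_t$. Start from the HJB equation for $v_t$: for LQG this is standard, while for LEQG one uses Fleming's log transform on the exponential-of-integral cost to turn it into an HJB with an additional $\tfrac{1}{2}|\nabla v|_{\theta\Sigma}^2$ term. In either case the optimal feedback is an affine function of $\nabla v_t$, and plugging it back yields the closed-form HJB. Since $p_t \propto \psi(v_t)$ with $\psi(z) = e^{-\alpha z}$ ($\alpha=1$ for LQG, $\alpha=|\theta|$ for LEQG), I would differentiate the normalization identity and use $\nabla v_t = -\alpha^{-1}\nabla\log p_t$ and $\nabla^2 v_t = -\alpha^{-1}\nabla^2 \log p_t$ to rewrite the HJB as a (nonlinear) Fokker--Planck-type equation for $p_t$. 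Because $v_t$ is quadratic, $p_t$ is Gaussian with covariance $S_t$ solving the dual DRE of Table~\ref{tb:dre}; the terminal density is $p_T \propto e^{-\alpha v_T} \propto \mathcal{N}(0,S_T)$.

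Next I would derive the PDE satisfied by $\bar{p}_t$. Because the Brownian motions $\backward{W}$ and $\backward{\eta}$ in~\eqref{eq:Ybar} are backward, the density evolves according to an anti-causal Fokker--Planck equation
\[
-\partial_t \bar{p}_t(x) = -\nabla\!\cdot\!\bigl((Ax + \vf_t + \correct_t)\bar{p}_t\bigr) + \tfrac{1}{2}\mathrm{tr}\bigl((B\,\cov(\eta)B^\tp + \Sigma)\nabla^2 \bar{p}_t\bigr).
\]
The matching step is the heart of the proof: with the Gaussian ansatz $\bar{p}_t = \mathcal{N}(\bar{n}_t,\bar{S}_t)$ one has $\nabla \log \bar{p}_t(x) = -\bar{S}_t^{-1}(x-\bar{n}_t)$, so the mean-field interaction terms of the form $\tfrac{1}{2}\Sigma \bar{S}_t^{-1}(x-\bar{n}_t)$ in $\vf_t$ reproduce precisely the quadratic $|\nabla v_t|^2$-type nonlinearity produced by the HJB log transform, and the term $\tfrac{1}{2}\bar{L}_t C(x+\bar{n}_t)$ (with $\bar{L}_t = \bar{S}_t C^\tp$) accounts for the $C^\tp C$ term in $\mathcal{D}^\dagger$. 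The choice of $\cov(\eta)$ in Table~\ref{tb:soln-lq} supplies the correct effective diffusion $B\cov(\eta)B^\tp = D$, while for LEQG the factor $\sgn(\theta)$ and the scaling by $|\theta|$ reconcile the risk-sensitive cross term $\theta\Sigma$ in $\mathcal{D}^\dagger$ with the Fleming extra quadratic from the log transform.

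Once the two PDEs are shown to coincide under these choices, uniqueness of solutions to the linear-in-covariance matrix ODE for Gaussian densities (equivalently, uniqueness of the dual DRE together with a mean ODE that has the trivial zero solution from the symmetric terminal condition) forces $\bar{p}_t = p_t$ for all $t$. The optimal control formula is then immediate from the HJB first-order condition $U_t^* = -R^{-1}B^\tp \nabla v_t$ combined with $\nabla v_t = -\alpha^{-1}\nabla\log p_t = -\alpha^{-1}\nabla \log \bar{p}_t$. I expect the main obstacle to be bookkeeping in the LEQG case, where one must check carefully that the sign of $\theta$ enters both the diffusion coefficient in $\bar{p}_t$ and the quadratic HJB term with exactly matching signs; the forward-/backward-time conventions and the factor of $|\theta|$ in $S_t$ versus $P_t^{-1}$ are easy to get wrong.
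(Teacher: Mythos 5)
Your proposal is correct in outline but takes a genuinely different route from the paper. The paper's proof never touches the HJB or Fokker--Planck equations: it observes that, after substituting $\vf$ and $\correct$ from Table~\ref{tb:soln-lq}, the mean-field SDE~\eqref{eq:Ybar} is a (backward) Ornstein--Uhlenbeck process, hence $\Ybarbar_t$ is Gaussian, and it therefore suffices to write down the two moment ODEs for $\bar{n}_t$ and $\bar{S}_t$, check that $\bar{n}_T=0$ forces $\bar{n}_t\equiv 0$, and check that $\bar{S}_t$ satisfies the same dual DRE and terminal condition as $S_t$ from~\eqref{eq:St}. Your PDE-level matching (log-transformed HJB for $p_t$ versus anti-causal Fokker--Planck for $\bar{p}_t$, then uniqueness) is the ``derivation'' view of why the interaction terms are what they are, and it would generalize beyond the Gaussian setting, whereas the paper's moment computation is shorter and avoids the backward-time sign conventions you correctly flag as treacherous. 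Two bookkeeping points if you carry your route out: the term $\half\Sigma\bar{S}_t^{-1}(z-\bar{n}_t)$ is the correction $\correct_t$ (not part of $\vf_t$), and its role is to absorb the extra diffusion $\Sigma$ injected by $\sigma\,\ud\backward{W}_t$ (note $\correct_t\bar{p}_t=-\half\Sigma\nabla\bar{p}_t$ under the Gaussian ansatz, so it converts transport into a Laplacian term); the $|\nabla v|_D^2$ nonlinearity of the HJB is instead supplied by the exploration noise $B\,\ud\backward{\eta}_t$ through $B\,\cov(\eta)B\tp=D$. With that attribution straightened out, your argument closes and also yields the optimal-control formula directly from the first-order condition, which the paper leaves implicit.
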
 
\begin{proof}
See Section \ref{app:proof-mf}. 
\end{proof}

Note that the $\mathcal{A}_t$ in \eqref{eq:At} is the sum of $\vf$ and $\correct$. These quantities are expressed individually, since $\vf$ depends entirely on parameters obtained from the LQ cost, and $\correct$ depends entirely on parameters appearing in the coefficient for noise $\sigma$. 

\begin{table}
\caption{Vector fields for \eqref{eq:Ybar}.}
\vspace{1em}
\centering
\begin{tabular}{@{}cccc@{}} \toprule
& $\vf_t(z; n_t, S_t)$ & $\correct_t(z; n_t, S_t)$ & $\cov(\eta)$ \\ \midrule
LQG& $\half S_t C\tp C(z + n_t) $ & $\half \Sigma S_t\inv (z - n_t)$ & $R\inv$  \\ \midrule
LEQG& \multirowcell{4}{$\frac{|\theta|}{2} S_t C\tp C(z + n_t)$} &  \multirowcell{2}{$\Sigma S_t\inv (z - n_t)$} & \multirowcell{4}{$(\sqrt{|\theta|}R)\inv$} \\ 
$\theta > 0$ & &  \\ \cmidrule{1-1} \cmidrule{3-3}
LEQG&  &  \multirowcell{2}{$0$} &   \\ 
$\theta < 0$ & & &  \\ \bottomrule
\end{tabular}
\label{tb:soln-lq}
\end{table} 

\subsection{Finite-N approximation}

The mean-field process \eqref{eq:Ybar} is empirically approximated by simulating a
system of controlled interacting particles 
$\{Y^i_t \in\Re^d: 0 \leq t\leq T, i =1,\ldots,N\}$ 
according to \eqref{eq:dual_enkf_intro}.

\subsection{Details of mean field system}
\label{app:proof-mf}

What we need to show is that $\Ybarbar_t \sim \normal(0,S_t)$ for each $0 \le t \le T$.
Upon substitution of $\vf$ and $\correct$, the mean field system \eqref{eq:Ybar} for $\Ybarbar$ becomes, for LQG 
\begin{align*}
\ud\Ybarbar_t = A\Ybarbar_t\ud t + B\ud\backward{\eta}_t + \frac{1}{2}\bar{S}_tC\tp(C\Ybarbar_t + C\bar{n}_t)\ud t + \frac{1}{2}\sigma\sigma\tp\bar{S}_t\inv(\Ybarbar_t - \bar{n}_t)\ud t + \sigma \ud\backward{W}_t, \quad 
\end{align*}
and for LEQG 
for $\theta > 0$
\begin{align*}
\ud \Ybarbar_t &= A\Ybarbar_t\ud t + B \ud\backward{\eta}_t + \sigma \ud\backward{W}_t + \frac{\theta}{2}\bar{S}_tC\tp(C\Ybarbar_t + C\bar{n}_t)\ud t + \sigma\sigma\tp\bar{S}_t\inv(\Ybarbar_t - \bar{n}_t)\ud t, \nonumber 
\end{align*}
for $\theta < 0$
\begin{align*}
d\Ybarbar_t &= A\Ybarbar_t\ud t + B\ud\backward{\eta}_t + \sigma \ud\backward{W}_t - \frac{\theta}{2}\bar{S}_tC\tp(C\Ybarbar_t + C\bar{n}_t)\ud t,  \nonumber 
\end{align*}
where
\begin{align*}
\bar{n}_t = \E{\Ybarbar_t}, \, \bar{S}_t = \E{(\Ybarbar_t - \bar{n}_t)(\Ybarbar_t - \bar{n}_t)\tp}, \, \, \Ybarbar_0 \sim \normal(0,P_T^{-1}), \, \ud W_t \sim \normal(0,\identity \ud t)
\end{align*}
and $\eta$ is a Brownian motion with covariance as in Table \ref{tb:soln-lq}. 
Then we have for LQG
\begin{align*}
\dot{\bar{n}}_t &= (A + \bar{S}_tC\tp C)\bar{n}_t,  \\
\dot{\bar{S}}_t &= A\bar{S}_t + \bar{S}_tA\tp + \bar{S}_tC\tp C\bar{S}_t - BR^{-1}B\tp,
\end{align*}
and for LEQG
\begin{align*}
\dot{\bar{n}}_t &= (A + \frac{|\theta|}{2}\bar{S}_tC\tp C)\bar{n}_t  \\
\dot{{\bar{S}}}_t &= A\bar{S}_t + \bar{S}_tA\tp + |\theta|\bar{S}_t C\tp C\bar{S}_t - \frac{1}{|\theta|}(BR^{-1}B\tp - \theta\sigma\sigma\tp).
\end{align*}
The terminal conditions are for all cases, $\bar{n}_T = 0$ and $\bar{S}_T = S_T$. Since $\bar{n}_T$ is zero, then $\bar{n}_t = 0$ for all $0 \le t \le T$. And since $\bar{S}_t$ follows the same ODE as $S_t$ (in \eqref{eq:St}) and has the same terminal condition, it must be that $\bar{S}_t = S_t$ for all $0 \le t \le T$.

Finally, $Y$ is a Gaussian process since the SDE \eqref{eq:Ybar} is an Ornstein-Uhlenbeck process with a Gaussian terminal condition.

\section{Algorithms for implementation}
\label{app:alg}

Algorithm \ref{alg:P} is the offline algorithm to estimate the $\mathcal{Q}$ function, and Algorithm \ref{alg:EnKF} is the online algorithm to compute the optimal control.

\begin{algorithm}[t]
	\caption[Offline]{\textbf{[offline]} dual EnKF algorithm to approximate
      empirical Q function}
	\label{alg:P}
	\begin{algorithmic}[1]
		\REQUIRE Simulation time $T$, simulation step-size $\stepsize$, number of particles $N$, simulator $\simulator$ for \eqref{eq:dyn} (see Definition \ref{defn:sim}), terminal covariance $S_T$ from \eqref{eq:St}, running cost function $C$, and control cost matrix $R$, and risk parameter $\theta$ if applicable, the function $\mathcal{A}$ and covariance $\cov(\eta)$ from\eqref{eq:At} and \eqref{eq:cov}.
		\RETURN $\{P^{(N)}_k, \mathcal{Q}^{(N)}(\cdot,\cdot;k,\stepsize) : k=0,1,2,\ldots,\frac{T}{\stepsize} -1\}$	
		\STATE  $T_F = \frac{T}{\stepsize}$ 
		\STATE  Initialize $\{{Y}^i_{T_F}\}_{i=1}^N\iid \normal(0,S_T)$
		\STATE calculate $n^{(N)}_{T_F} = N^{-1}\sum_{i=1}^N { Y}^i_{T_F}$ 
		\STATE calculate $ S^{(N)}_{T_F}= (N-1)^{-1}\sum_{i=1}^N ({ Y}^i_{T_F} -n^{(N)}_{T_F})({ Y}^i_{T_F} - n_{T_F}^{(N)})\tp$
		\FOR{$k=T_F$  to $1$} 
\STATE Calculate	$ A^{(N)}_k= \mathcal{A}({Y}^i_t; n_t^{(N)}, S_t^{(N)}) $
		\FOR{$i=1$ to $N$}
		\STATE $\Delta \eta_k^i \iid \normal(0,\cov(\eta)\stepsize)$
		\STATE $\Delta{Y}^i_k = \simulator({ Y}^i_k,\Delta \eta_k^i,\stepsize ) +
		A^{(N)}_k  \stepsize$
		\STATE ${Y}^i_{k-1} = {Y}^i_k - \Delta{Y}^i_k$
		\ENDFOR	
		\STATE Calculate $n^{(N)}_{k-1} = N^{-1}\sum_{i=1}^N { Y}^i_{k-1}$ 
		\STATE Calculate	$ S^{(N)}_{k-1}= (N-1)^{-1}\sum_{i=1}^N ({ Y}^i_{k-1} -n^{(N)}_{k-1})({ Y}^i_{k-1} - n_{k-1}^{(N)})\tp$
		\STATE  Obtain $P^{(N)}_{k-1}$ from $\SN_{k-1}$ using \eqref{eq:St}, and $\mathcal{Q}^{(N)}(x,a;k,\stepsize)$ using \eqref{eq:empiricalQ}.
		\ENDFOR  
	\end{algorithmic}
\end{algorithm}

\begin{algorithm}[t]
    \caption{\textbf{[online]} dual EnKF algorithm to calculate optimal control}
    \label{alg:EnKF}
    \begin{algorithmic}[1]
    	\REQUIRE Simulation time $T$, simulation step-size $\stepsize$, number of averaging evaluations $N_e$,
    	empirical Q-function $ \mathcal{Q}^{(N)}(x,a,\stepsize)$ (see Definition \ref{defn:q}), $\{e_i\}_{i=1}^{m}$ the standard basis of $\R^m$.
    	\RETURN optimal control input  $\{  \hat{U}_k^{(N)} \in \R^m : k=0,1,2,\ldots,\frac{T}{\stepsize} -1  \}$.
   	\STATE Define $T_F\coloneqq \frac{T}{\stepsize}$ 	 
        \FOR{$k=0$ to $T_F-1$}
                	\STATE Observe state of the system, denoted $x_k$
					\STATE Define $y_k := P_k^{(N)} x_k, M_1 := 0$		
			\FOR{$j=1$ to $N_e$}      	
          	\STATE $M_1 \leftarrow  M_1 + \mathcal{Q}^{(N)}(x_k,0,\stepsize)$
          	\ENDFOR
          	\STATE $M_1 \leftarrow (N_e)^{-1} M_1$       	
          			\FOR{$i=1$ to $m$} 
          	\STATE Define $M_2 := 0$
          	
			\FOR{$j=1$ to $N_e$}      	
          	\STATE $M_2 \leftarrow  M_2 + \mathcal{Q}^{(N)}(x_k,R^{-1}e_i,\stepsize)$
          	\ENDFOR
          	\STATE $M_2 \leftarrow (N_e)^{-1} M_2$       
        	\STATE  $\ip{ \hat{U}^{(N)}_k}{ e_i } = M_2 - M_1 - \frac{1}{2}(R^{-1})_{ii}\stepsize$ 
        	\ENDFOR
          			\STATE Apply control $\hat{U}^{(N)}_k$ to the true system           			     	
        \ENDFOR
    \end{algorithmic}
\end{algorithm}

\section{Error Analysis}
\label{app:error}

\subsection{Obtaining bounds in \eqref{eq:error-S}}
\label{app:delmoral}

We get the bound \eqref{eq:error-S} from \cite[equation (2.10)]{delmoral-2019-matrix-ricc} (where the reader may also refer to Section 1.1, equation (1.4) and equation (3.7) of \cite{delmoral-2019-matrix-ricc} for more clarity). In the following, we go through the steps of obtaining the bounds \eqref{eq:error-S} using the aforementioned results from \cite{delmoral-2019-matrix-ricc}. The assumption $N \ge d + 1$ is justified in the end of \cite[Section 3.1]{delmoral-2019-matrix-ricc}.

We analyze the SDE \eqref{eq:Ybar}, which is the the mean field system for the particle system \eqref{eq:dual_enkf_intro}. We will analyze the system forward in time. To that end, consider the following mean field system for LQG 
\begin{align*}
\ud H_t = -AH_t\ud t + B\ud \eta_t - \frac{1}{2}\Omega_tC\tp(CH_t + Ch_t)\ud t - \frac{1}{2}\sigma\sigma\tp\Omega_t\inv(H_t - h_t)\ud t + \sigma \ud W_t, \quad 
\end{align*}
and for LEQG 
with $\theta > 0$
\begin{align*}
\ud H_t &= -AH_t\ud t + B\ud \eta_t + \sigma \ud W_t - \frac{\theta}{2}\Omega_tC\tp(CH_t + Ch_t)\ud t - \sigma\sigma\tp\Omega_t\inv(H_t - h_t)\ud t, \nonumber 
\end{align*}
for LEQG with $\theta < 0$
\begin{align*}
\ud H_t &= -AH_t\ud t + B\ud \eta_t + \sigma \ud W_t + \frac{\theta}{2}\Omega_tC\tp(CH_t + Ch_t)\ud t,  \nonumber 
\end{align*}
where
\begin{align*}
h_t = \E{H_t}, \, \Omega_t = \E{(H_t - h_t)(H_t - h_t)\tp}, \, \, H_0 \sim \normal(0,P_T^{-1}), \, \ud W_t \sim \normal(0,\identity \ud t)
\end{align*}
and $\eta$ is a Brownian motion with covariance as in Table \ref{tb:soln-lq}.
Then we have for LQG
\begin{align*}
\dot{h}_t &= -(A + \Omega_tC\tp C)h_t  \\
\dot{\Omega}_t &= -A\Omega_t - \Omega_tA\tp - \Omega_tC\tp C\Omega_t + BR^{-1}B\tp
\end{align*}
and for LEQG
\begin{align*}
\dot{h}_t &= -(A + \frac{|\theta|}{2}\Omega_tC\tp C)h_t  \\
\dot{\Omega}_t &= -A\Omega_t - \Omega_tA\tp - |\theta|\Omega_t C\tp C\Omega_t + \frac{1}{|\theta|}(BR^{-1}B\tp - \theta\sigma\sigma\tp)
\end{align*}


If the system is implemented using $N$ particles as follows for LQG,
\begin{align*}
\ud H_t^i &= -AH_t^i\ud t + B\ud \eta_t^i - \frac{1}{2}\Omega_t^{(N)}C\tp C(H_t^i + h_t^{(N)})\ud t \\ & \qquad - \frac{1}{2}\sigma\sigma\tp(\Omega_t^{(N)})\inv(H_t^i - h_t^{(N)})\ud t + \sigma \ud W_t^i 
\end{align*}
and for LEQG for $\theta > 0$
\begin{align*}
\ud H_t^i &= -AH_t^i\ud t + B\ud \eta^i_t + \sigma \ud W_t^i - \frac{\theta}{2}\Omega_t^{(N)}C\tp(CH_t^i \\ & \qquad + Ch_t^{(N)})\ud t - \sigma\sigma\tp(\Omega_t^{(N)})\inv(H_t^i - h_t^{(N)})\ud t 
\end{align*}
for $\theta < 0$
\begin{align*}
\ud H_t^i &= -AH_t^i\ud t + B\ud \eta_t^i + \sigma \ud W_t^i + \frac{\theta}{2}\Omega_t^{(N)}C\tp(CH_t^i + Ch_t^{(N)})\ud t 
\end{align*}
where
\begin{align*}
h_t^{(N)} = \frac{1}{N}\sum_{i=1}^{N}H_t^i, \quad \Omega_t^{(N)} = \frac{1}{N-1}\sum_{i=1}^{N}{(H^i_t - h^{(N)}_t)(H^i_t - h^{(N)}_t)\tp}
\end{align*}
and $\eta_t^i$ are iid copies of $\eta$.
 Then we have 
 The time-evolution for $\Omega^{(N)}_t$ is obtained by the application of the It\^o rule to its definition~\cite[Proposition 4.2]{delmoral-2023}
\begin{subequations}
\begin{align}
d\Omega_t^{(N)} &= (-A\Omega_t^{(N)} - \Omega_t^{(N)}A\tp - \Omega_t^{(N)}C\tp C\Omega_t^{(N)} + B_1R^{-1}_1B_1\tp)\ud t + \frac{1}{\sqrt{N}}\ud M_t \\
\ud M_t &= \frac{1}{\sqrt{N}}\sum_{i=1}^{N}(e^{i}_t(B\ud \eta_t^i + \sigma \ud W_t^i)\tp + (B\ud \eta_t^i + \sigma \ud W_t^i) (e^i_t)\tp), \quad e_t^i \coloneqq H_t^i - h_t^{(N)}
\end{align}
\label{eq:omegaN}
\end{subequations}

\begin{remark}\label{rmk:omegaN}
Observe that  $\Omega_t={S}_{T-t}$ and $\Omega^{(N)}_t =S^{(N)}_{T-t}$.
\end{remark}

Finally, we get the bound \eqref{eq:error-S} from \cite[equation (2.10)]{delmoral-2019-matrix-ricc}. To see how to apply the result to our case, notice that $\Omega_t^{(N)}$ follows the same dynamics as \cite[equation (3.7)]{delmoral-2019-matrix-ricc} (where the reader may refer Section 1.1 (with particular emphasis on equation (1.4)) of \cite{delmoral-2019-matrix-ricc} for better clarity).

\subsection{Obtaining bounds in \eqref{eq:error-Sinf}}
\label{app:aajerror}

These proofs are largely based on the proofs in \cite{anant-2022}.

\renewcommand{\OmegaN}{\Omega^{(N)}}
\newcommand{\BB}{\Sigma_B}
{\noindent \bf Notation:} Let $S^d_{+} \subset S^d \subset \R^{d \times d}$ denote the set of symmetric positive definite matrices and symmetric matrices respectively. Let $\langle Q_1, Q_2\rangle \coloneqq \trace(Q_1Q_2^\top)$ denote the Frobenius inner product for $Q_1,Q_2 \in \R^{d \times d}$.  Then $||\cdot||_{F} \coloneqq \sqrt{\langle Q_1, Q_1 \rangle}$.

In this section (that is, in Appendix \ref{app:aajerror}), for LEQG, we redefine $C \leftarrow \sqrt{|\theta|}C$ to keep notation the same for all problems for further analysis.

From Appendix \ref{app:delmoral} we know that $\Omega_t$ satisfies the Riccati equation
\begin{equation}\label{eq:Sbar-t}
\dot{\Omega}_t =  \Ricc(\Omega_t):=-A\Omega_t  - \Omega_t A^\transpose  - \Omega_tC^\transpose C \Omega_t + \BB,
\end{equation}  
where $\BB:=BR^{-1}B^\top$ for LQG and $\BB:=|\theta|\inv(BR^{-1}B^\top - \theta\Sigma)$ for LEQG with $\Sigma := \sigma\sigma\tp$. From \eqref{eq:omegaN} we know that 

	\begin{equation}\label{eq:SNt}
	\ud \OmegaN_t = \Ricc( \OmegaN_t)\ud t + \frac{1}{\sqrt{N}}\ud M_t,
	\end{equation} 
	where $\{ M_t : t \ge 0 \}$ is a martingale 
	given by
	\begin{align*}
	\ud M_t &= \frac{1}{N-1}\sum_{i=1}^{N}(e^{i}_t(B\ud \eta_t^i + \sigma \ud W_t^i)\tp + (B\ud \eta_t^i + \sigma \ud W_t^i) (e^i_t)\tp), \quad e_t^i \coloneqq H_t^i - h_t^{(N)}
	\end{align*}
	with quadratic variation
		\begin{align*}
 \ud\langle M\rangle_t = (\trace(\BB) + \Sigma) \OmegaN_t + (\BB + \Sigma) \trace( \OmegaN_t) + (\BB + \Sigma)  \OmegaN_t  +\OmegaN_t  (\BB + \Sigma) 
	\end{align*}

Let $\phi(t,Q)$ denote the semigroup associated with the Riccati equation such that for any positive definite matrix $Q \in S^d_+$, 
\begin{equation*}
	\frac{\partial \phi}{\partial t} (t,Q) = \Ricc(\phi(t,Q)),\quad \phi(0,Q)=Q. 
\end{equation*}
We define the first-order and the second-order derivatives which are the linear and bilinear operators $\frac{\partial \phi}{\partial Q}(t,Q):S^d \to S^d$ and $\frac{\partial^2 \phi}{\partial Q^2}(t,Q):S^d\times S^d\to S^d$ respectively as
\begin{align*}
\frac{\partial \phi}{\partial Q}(t,Q) (Q_1)   &:= \left.\frac{\ud}{\ud \epsilon}\right\vert_{\epsilon=0} \phi(t,Q+\epsilon Q_1) \\
\frac{\partial^2 \phi}{\partial Q^2}(t,Q)(Q_1, Q_1)  &:= \left.\frac{\ud^2}{\ud \epsilon^2} \right\vert_{\epsilon=0}\phi(t,Q+\epsilon Q_1) .
\end{align*}
We denote by $\| \frac{\partial \phi}{\partial Q}(t,Q) \|_{F,F}$ and $\| \frac{\partial^2 \phi}{\partial Q^2}(t,Q)\|_{F,F}$ induced-norm of these operators with respect to the Frobenius norm.   The following lemma is an intermediate result. 
\begin{lemma}
	For $\Omega_t$ and $\OmegaN_t$ defined in~\eqref{eq:Sbar-t} and \eqref{eq:SNt} respectively, the following is true,
	\begin{equation}
	\begin{aligned}
		\OmegaN_t-\Omega_t&= \frac{1}{\sqrt{N}}\int_0^t \frac{\partial \phi}{\partial Q}(t-s,\OmegaN_s) (\ud M_s) \\&+ \frac{1}{2N}\int_0^t \frac{\partial^2 \phi}{\partial Q^2}(t-s,\OmegaN_s)(\ud M_s, \ud M_s) + \phi(t,\OmegaN_0) - \phi(t,\Omega_0) 
	\end{aligned}
\label{eq:integral-representation}
\end{equation}
\end{lemma}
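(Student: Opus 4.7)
The plan is to apply an It\^o-type formula to the function $(s,Q) \mapsto \phi(t-s, Q)$ evaluated along the matrix-valued semimartingale $Q = \OmegaN_s$, and then exploit the semigroup property of the deterministic Riccati flow to annihilate the drift. Set $F(s,Q) := \phi(t-s, Q)$ so that $F(t,\OmegaN_t) = \OmegaN_t$ and $F(0,\OmegaN_0) = \phi(t,\OmegaN_0)$. Using the SDE \eqref{eq:SNt} and the fact that only the martingale part $N^{-1/2}\ud M_s$ contributes to the quadratic variation, the chain rule gives
\[
\ud F(s,\OmegaN_s) = \frac{\partial F}{\partial s}(s,\OmegaN_s)\,\ud s + \frac{\partial \phi}{\partial Q}(t-s,\OmegaN_s)\bigl(\ud \OmegaN_s\bigr) + \frac{1}{2N}\frac{\partial^2 \phi}{\partial Q^2}(t-s,\OmegaN_s)\bigl(\ud M_s,\ud M_s\bigr).
\]

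The key step is the drift cancellation. Differentiating the flow identity $\phi(t-s,\phi(s,Q)) = \phi(t,Q)$ in $s$ at arbitrary $s$ yields, for every admissible $Q$, the commutation relation
\[
\frac{\partial \phi}{\partial Q}(t-s, Q)\bigl(\Ricc(Q)\bigr) \;=\; \Ricc\bigl(\phi(t-s, Q)\bigr) \;=\; -\frac{\partial F}{\partial s}(s,Q).
\]
Plugging in the drift of $\OmegaN_s$ from \eqref{eq:SNt}, namely $\Ricc(\OmegaN_s)\,\ud s$, the $\partial_s F$ term and the first-order drift contribution cancel exactly, leaving
\[
\ud F(s,\OmegaN_s) \;=\; \frac{1}{\sqrt N}\frac{\partial \phi}{\partial Q}(t-s,\OmegaN_s)(\ud M_s) \;+\; \frac{1}{2N}\frac{\partial^2 \phi}{\partial Q^2}(t-s,\OmegaN_s)(\ud M_s,\ud M_s).
\]

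Integrating this identity from $s=0$ to $s=t$ produces $\OmegaN_t - \phi(t,\OmegaN_0)$ on the left-hand side, written as the sum of the stochastic integral and the second-order correction on the right-hand side of \eqref{eq:integral-representation}. Since $\Omega_t = \phi(t,\Omega_0)$ by definition of the semigroup, adding and subtracting $\phi(t,\OmegaN_0)$ yields the desired decomposition of $\OmegaN_t - \Omega_t$, where the residual $\phi(t,\OmegaN_0) - \phi(t,\Omega_0)$ captures the error coming solely from random initialization.

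The main obstacle is justifying the $C^2$ smoothness of $Q \mapsto \phi(t-s,Q)$ on the relevant domain so that the matrix It\^o formula applies with well-defined linear and bilinear derivative operators. Smooth dependence on initial data for the Riccati ODE (whose vector field $\Ricc$ is polynomial in $Q$) furnishes the required regularity, but one must verify that $\OmegaN_s$ remains in the open cone $S^d_+$ along the trajectory, where $\phi$ is smooth. This is precisely the role of the condition $N \ge d+1$ invoked earlier, which ensures that the empirical covariance $\OmegaN_s$ stays strictly positive definite almost surely, placing the semimartingale within the domain of smoothness of $\phi$ and legitimizing every step above.
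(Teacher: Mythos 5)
Your proof is correct and follows essentially the same route as the paper's: an It\^o expansion of $s\mapsto\phi(t-s,\Omega^{(N)}_s)$ with the drift annihilated via the commutation identity $\frac{\partial \phi}{\partial Q}(\tau,Q)\bigl(\text{Ricc}(Q)\bigr)=\text{Ricc}\bigl(\phi(\tau,Q)\bigr)$, followed by integration over $[0,t]$ and the add-and-subtract of $\phi(t,\Omega^{(N)}_0)$. The additional care you take with the $C^2$ regularity of the flow and with $N\ge d+1$ keeping $\Omega^{(N)}_s$ in the positive-definite cone goes beyond what the paper records but does not change the argument.
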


\begin{proof}
	We see that 
\begin{align*}
	\Omega^{(N)}_t -& \Omega_t =  \phi(0,\Omega^{(N)}_t)  - \phi(t,\Omega_0)\\
	&=   \phi(0,\Omega^{(N)}_t)  - \phi(t,\Omega^{(N)}_0)  +\phi(t,\Omega^{(N)}_0) - \phi(0,\Omega_0)
	\\&= \int_0^t\ud_s\phi(t-s,\Omega^{(N)}_s) +  \phi(t,\Omega^{(N)}_0) - \phi(t,\Omega_0).
\end{align*}
Evaluating the differential we have,
\begin{align*}
	\ud_s \phi(t-s,\Omega^{(N)}_s) &=   -\frac{\partial \phi}{\partial t} 
	(t-s,\Omega^{(N)}_s) \ud s + \frac{\partial \phi}{\partial Q}(t-s,\Omega^{(N)}_s) (\ud \Omega^{(N)}_s)  \\& \quad +  \frac{1}{2}\frac{\partial^2 \phi}{\partial Q^2}(t-s,\Omega^{(N)}_s) (\ud \Omega^{(N)}_s,\ud \Omega^{(N)}_s), 
\end{align*}
where we used the identity $ \frac{\partial \phi}{\partial t}(t,Q) =  \frac{\partial \phi}{\partial Q}(t,Q) (\text{Ricc}(Q))$. 
\end{proof}


We need the following assumption to use the aforementioned lemma to arrive at the desired result.
\begin{assumption}
	Consider the  semigroup corresponding to the Riccati equation~\eqref{eq:Sbar-t}. There are positive constants $c_1$, $c_2$, and $\lambda$ such that $\forall Q \in S^d_+$: 
	\begin{align*}
		\|\frac{\partial \phi}{\partial Q}(t,Q)\|_{F,F} \leq c_1e^{-2\lambda t},\quad 	\|\frac{\partial^2 \phi}{\partial Q^2}(t,Q)\|_{F,F} \leq c_2e^{-2\lambda t}. 
	\end{align*}
\label{assumption:Riccati}
\end{assumption}
Exponential decay holds for $(A,B)$ controllable and $(A,C)$ observable,  \cite[Section 2]{delmoral-2023}. However, the for the constants $c_1$ and $c_2$ to be the same for initial $Q$, see~\cite[Section 4.2]{bishop-2017} for detailed analysis of the Riccati equation under the additional assumption that  the matrix $C$ is full-rank.     

\begin{proposition}\label{prop:error-analysis-appendix}
	If Assumption~\ref{assumption:Riccati} holds, the the following upper-bound (repeated from \eqref{eq:error-Sinf}) is true
\begin{equation*}
\Expect[\|S^{(N)}_{t}-{S}_t\|_F] \leq
\frac{C_5}{\sqrt{N}} + C_6e^{-2\lambda (T-t)} \Expect[\|\SN_T-S_T\|_F],
\end{equation*}
where $C_5,C_6$ are  time-independent positive constants.
\end{proposition}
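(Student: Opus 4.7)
The strategy is to take expectations of Frobenius norms in the integral representation~\eqref{eq:integral-representation} and bound the three resulting terms separately. By Remark~\ref{rmk:omegaN}, the backward-time bound claimed in the proposition is equivalent, after setting $\tau = T - t$, to the forward-time estimate
\[
\Expect[\|\OmegaN_\tau - \Omega_\tau\|_F] \leq \frac{C_5}{\sqrt{N}} + C_6\, e^{-2\lambda \tau}\, \Expect[\|\OmegaN_0 - \Omega_0\|_F],
\]
so I would work forward in time throughout. Applying the triangle inequality to~\eqref{eq:integral-representation} yields three contributions: a martingale term $(\mathrm{I})$ carrying a $1/\sqrt{N}$, an Itô correction term $(\mathrm{II})$ carrying a $1/(2N)$, and the deterministic remainder $(\mathrm{III}) = \Expect[\|\phi(\tau, \OmegaN_0) - \phi(\tau, \Omega_0)\|_F]$.

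For $(\mathrm{III})$, the plan is to apply the fundamental theorem of calculus along the straight line joining $\Omega_0$ and $\OmegaN_0$ and use the operator-norm bound $\|\tfrac{\partial \phi}{\partial Q}(\tau, \cdot)\|_{F,F} \leq c_1 e^{-2\lambda \tau}$ from Assumption~\ref{assumption:Riccati}; this directly produces the $C_6\, e^{-2\lambda \tau}\, \Expect[\|\OmegaN_0 - \Omega_0\|_F]$ summand with $C_6 = c_1$. For $(\mathrm{I})$, I would pass to the second moment via Jensen and then apply the matrix-valued Itô isometry to get
\[
(\mathrm{I})^2 \leq \frac{1}{N} \int_0^\tau c_1^2\, e^{-4\lambda(\tau-s)}\, \Expect\bigl[\trace\bigl(\ud \langle M \rangle_s / \ud s\bigr)\bigr]\, \ud s,
\]
and bound $(\mathrm{II})$ directly by $\tfrac{1}{2N}\int_0^\tau c_2\, e^{-2\lambda(\tau-s)}\, \Expect[\|\ud \langle M\rangle_s/\ud s\|_F]\, \ud s$. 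From the explicit formula for $\ud \langle M\rangle_s$ in Appendix~\ref{app:delmoral}, both integrands are affine in $\OmegaN_s$ with coefficients depending only on $\BB$ and $\Sigma$, so the estimates reduce to controlling $\sup_{s \geq 0} \Expect[\|\OmegaN_s\|_F]$ multiplied by the bounded integrals $\int_0^\tau e^{-4\lambda(\tau-s)}\,\ud s$ and $\int_0^\tau e^{-2\lambda(\tau-s)}\,\ud s$. This gives $(\mathrm{I}) = O(1/\sqrt{N})$ and $(\mathrm{II}) = O(1/N)$; since $1/N \leq 1/\sqrt{N}$, the latter is absorbed into $C_5/\sqrt{N}$.

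The main obstacle is establishing a time-uniform a priori moment bound $\sup_{s \geq 0} \Expect[\|\OmegaN_s\|_F] \leq K$, without which the quadratic-variation-based estimates of $(\mathrm{I})$ and $(\mathrm{II})$ cannot be closed with constants independent of $\tau$. I would obtain this through a Lyapunov argument on the matrix SDE~\eqref{eq:SNt}: controllability of $(A,B)$ together with observability of $(A,C)$ makes $\langle Q, \Ricc(Q)\rangle$ strictly negative for $\|Q\|_F$ large, while the martingale increment's infinitesimal generator applied to the Lyapunov function $\|Q\|_F^2$ has at most linear growth in $\|Q\|_F^2$ (from the linear-in-$\OmegaN$ structure of $\ud\langle M\rangle$). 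An Itô/Grönwall computation then yields the required uniform moment bound. Combining the three estimates produces the stated inequality with constants $C_5, C_6$ that depend only on the model parameters, $c_1$, $c_2$, and $\lambda$.
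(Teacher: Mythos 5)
Your decomposition, and your treatment of each of the three terms, is the same as the paper's: triangle inequality on \eqref{eq:integral-representation}; Jensen plus the It\^o isometry and the first-derivative bound of Assumption~\ref{assumption:Riccati} for the martingale term; the second-derivative bound for the It\^o-correction term; and the first-derivative bound again for the initial-condition term, followed by the change of variable $t \mapsto T-t$ via Remark~\ref{rmk:omegaN}. The one place you genuinely diverge is the a priori moment bound needed to close the first two estimates. The paper does not prove a uniform bound on $\Expect[\|\Omega^{(N)}_s\|_F]$ from scratch: it invokes the comparison inequality $\Expect[\trace(\Omega^{(N)}_t)] \leq \trace(\Omega_t)$ from \cite[Theorem 5.2]{delmoral-2023} (the quadratic variation is affine in $\Omega^{(N)}_s$ with trace-type coefficients, which is why a trace bound suffices), and then uses exponential convergence of the deterministic Riccati flow to bound $\sup_{t\ge 0}\trace(\Omega_t) \le E_0$. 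Your proposed Lyapunov/Gr\"onwall argument on the matrix SDE --- exploiting that $\langle Q, \Ricc(Q)\rangle$ is dominated by the negative cubic term $-\trace(CQ^3C\tp)$ for positive semidefinite $Q$ (using $C\tp C \succ 0$) while the quadratic variation grows only linearly in $Q$ --- is a plausible self-contained substitute, and is essentially how such moment bounds are proved in the cited literature; but as written it is only a sketch, and it would need care to get a bound that is uniform in $t$ (not merely Gr\"onwall-exponential) and to handle the fact that $\Omega^{(N)}_t$ is only positive semidefinite. The paper's route buys brevity at the cost of an external citation; yours buys self-containedness at the cost of a nontrivial additional lemma. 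Either way the final constants and the stated inequality come out the same.
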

\begin{proof}
	Using triangle inequality for norm on ~\eqref{eq:integral-representation} we get
	\begin{align*}
		\Expect[\|\OmegaN_t-\Omega_t\|_F] \leq \frac{r_1}{\sqrt{N}} + \frac{r_2}{2N} + r_3
	\end{align*}
where  we define
\begin{align*}
	r_1 & := \Expect \left[\left\|\int_0^t \frac{\partial \phi}{\partial Q}(t-s,\Omega_s)(\ud M_s)\right\|_F\right]\\ 
	r_2 & := \Expect  \left[\int_0^t \left\|\frac{\partial^2 \phi}{\partial Q^2}(t-s,\Omega_s)(\ud M_s,\ud M_s)\right\|_F\right]  \\ 
	r_3 & := \Expect \left[\left\| \phi(t,\OmegaN_0) - \phi(t,\Omega_0)\right \|_F\right]
\end{align*}
Now we get bounds for $r_1,r_2$ and $r_3$. For $r_1$,
\begin{align*}
	r_1 
	&\leq  \left[\Expect\left[ \left\|\int_0^t \frac{\partial \phi}{\partial Q}(t-s,\Omega_s)(\ud M_s)\right\|_F^2 \right]\right]^\half\\
	&=\left[\int_0^t \Expect \left[\left\| \frac{\partial \phi}{\partial Q}(t-s,\Omega_s)(\ud M_s)\right\|_F^2\right] \right]^\half\\
	& \leq \left[\int_0^t \Expect \left[\| \frac{\partial \phi}{\partial Q}(t-s,\Omega_s)\|_{F,F}^2 \|\ud M_s\|_F^2\right] \right]^\half   \\
	&\leq \left[\int_0^t 4c_1^2e^{-4\lambda (t-s)}\trace(\Sigma_B + \Sigma)\Expect[\trace(\OmegaN_s)]\ud s\right]^\half
\end{align*}
where in the first inequality we used Jensen's inequality, for the second inequality we used It\"o isometry in the second step, and Assumption~\ref{assumption:Riccati} in the last inequality. For $r_2$ we used Assumption~\ref{assumption:Riccati} to see that 
\begin{align*}
	r_2 &\leq \Expect  \left[\int_0^t \|\frac{\partial^2 \phi}{\partial Q^2}(t-s,\Omega_s) \|_F \| \ud M_s\|^2_F\right]\\
	&\leq  \int_0^t 4c_2e^{-2\lambda (t-s)}\trace(\Sigma_B + \Sigma) \Expect[\trace(\OmegaN_s) ]\ud s
\end{align*}
	For the bounds on $r_3$ we use the bounds on the first derivative in Assumption~\ref{assumption:Riccati} to get
	\begin{align*}
	r_3 \leq c_1e^{-2\lambda t}\Expect[\|\OmegaN_0 - \Omega_0\|_F]
	\end{align*} 
Upon using the bound  $\Expect[\trace(\OmegaN_t)] \leq \trace(\Omega_t) $ from~\cite[Theorem 5.2]{delmoral-2023}, and from exponential convergence of $\Omega_t$ to $\bar{\Omega}$, there exists $E_0 \in (0,\infty)$ such that $\trace(\Omega_t) \leq \sup_{t\geq  0} \trace(\Omega_t)\le E_0$ we get that
\begin{align*}
		\Expect[\|\OmegaN_t-\Omega_t\|_F] \!\leq \!(c_1 \!+ \!c_2\sqrt{\epsilon})\sqrt{\epsilon} \!+ \!c_1e^{-2\lambda t} \Expect[\|\OmegaN_0 - \Omega_0\|_F]
	\end{align*}
where $\epsilon: = \frac{E_0\trace(\Sigma_B + \Sigma)}{\lambda N}$. Making a change of variable from $t$ to $T-t$ and recalling Remark \ref{rmk:omegaN} concludes the proof.  
\end{proof}

\subsection{Obtaining bounds in \eqref{eq:error-U}}
\label{app:errorU}

Fix a time $t \in [0,T]$ and $x \in \R^d$. Fix a step size $\stepsize$. 
Recall the Q function is, with $y := P_t^{(N)}x$,
\begin{align*}
\mathcal{Q}(x,a,\stepsize) &= y^T\simulator(x,a,\stepsize) +  (\frac{1}{2}|Cx|^2+\frac{1}{2}a^\top Ra)\stepsize \\
&=  y^T((Ax + Ba)\stepsize + \sigma\Delta W) +  (\frac{1}{2}|Cx|^2+\frac{1}{2}a\tp R a)\stepsize
\end{align*}
with $\Delta W \sim \normal(0,\identity \stepsize)$. To obtain the optimal control, we minimize the Hamiltonian with respect to $a$ after substituting the value of the momentum $y$ in terms of $x$.  
Define $K_t := -R\inv B\tp P_t$, $U_t^{\text{opt}} := K_tx$, $K_t^{(N)} := -R\inv B\tp P_t$ and $U_t^{(N)} \coloneqq K_t^{(N)}x$.
%
The first step is getting an expression for $\hat{U}_t^{(N)}$.
The Q function is, 
\begin{align*}
\mathcal{Q}(x,a,\stepsize) &= \frac{1}{2}a^\top Ra \stepsize + (x\tp P_t^{(N)} B \stepsize)a +  \alpha\tp\Delta W + \varphi(x)\\
\varphi(x) &\coloneqq \frac{1}{2} x\tp (Q\stepsize + P_t^{(N)}A\stepsize + A\tp P_t^{(N)}\stepsize)x, \quad \alpha \coloneqq \sigma\tp P_t^{(N)} x 
\end{align*}
To be consistent with notation used in the algorithm, define
\begin{align*}
M_1 \coloneqq & \frac{1}{N_e} \sum_{i=1}^{N_e} \mathcal{Q}(x,0,\stepsize) = \varphi(x) + \alpha\tp (\Delta W)_1 \\
(\Delta W)_1 &:= \frac{1}{N_e}\sum_{i=1}^{N_e}(\Delta W) \sim \normal(0,\frac{\stepsize}{N_e})\\
\quad M_2^i \coloneqq & \frac{1}{N_e} \sum_{i=1}^{N_e} \mathcal{Q}(x,R\inv e_i,\stepsize)
\end{align*}
where the summation denotes that each call of the Hamiltonian function produces an independent realization of the random variable $\Delta W$. 
Now,
\begin{align*}
\mathcal{Q}(x,R\inv e_i,\stepsize) = \half (R\inv)_{ii}\stepsize + \beta_i\stepsize + \alpha\tp \Delta W + \varphi(x)
\end{align*}
where $\beta_i := e_i\tp(R\inv B\tp P_t^{(N)}x) = \ip{U_t^{(N)}}{e_i}$, which gives
\begin{align*}
\mathcal{Q}(x,R\inv e_i,\stepsize) - \half (R\inv)_{ii}\stepsize - M_1 &= \beta_i\stepsize + \alpha\tp \Delta W + (\varphi(x) - M_1) \\
&= \beta_i\stepsize + \alpha\tp ( \Delta W - (\Delta W)_1 )
\end{align*}
Therefore, 
\begin{align*}
M_2^i - \half (R\inv)_{ii}\stepsize - M_1 = \beta_i\stepsize + \alpha\tp ( (\Delta W)_2 - (\Delta W)_1 ).
\end{align*}
where again $(\Delta W)_2 := \frac{1}{N_e}\sum_{i=1}^{N_e}(\Delta W) \sim \normal(0,\frac{\stepsize}{N_e})$ and the summation similarly denotes that each function call of the Hamiltonian gives an independent realization of $\Delta W$.
Since by definition, $\ip{\hat{U}^{(N)}_t}{e_i} = (M_2^i - \half (R\inv)_{ii}\stepsize - M_1)\frac{1}{\stepsize}$ we have
\begin{align}\label{eq:uhatn}
\ip{\hat{U}^{(N)}_t}{e_i} = \ip{{U}^{(N)}_t}{e_i} + \alpha\tp \left(\frac{ (\Delta W)_2 - (\Delta W)_1}{\stepsize} \right).
\end{align}
Define 
\begin{align*}
\omega := P_t^{(N)}\sigma \xi, \quad \xi := \left( \frac{ (\Delta W)_2 - (\Delta W)_1}{\stepsize}\right) \sim \normal(0,\frac{2}{N_e\stepsize}).
\end{align*}
Then from \eqref{eq:uhatn} we see that $\hat{U}^{(N)}_t = {U}^{(N)}_t + ({\ones}\omega\tp)x $. Thus, we define $\hat{K}_t^{(N)} := K_t^{(N)} + {\ones}\omega\tp$, where $\ones$ denotes the vector with each entry equal to 1, to get $\hat{U}^{(N)}_t = \hat{K}_t^{(N)}x $.
Now we give the mean square error between  $\hat{K}_t^{(N)}$ and $K_t$ as
\begin{align*}
\half\E{\| \hat{K}_t^{(N)} - K_t \|^2} \le \E{\| \hat{K}_t^{(N)} - K_t^{(N)} \|^2} + \E{\| {K}_t^{(N)} - K_t^* \|^2}
\end{align*}
The first term can be estimated as
\begin{align*}
\E{\| {K}_t^{(N)} - K_t \|^2} &= \E{\| \ones\tp \omega \|^2} =  n\E{\| P_t^{(N)}\sigma \xi\|^2}\\
&\le 2n\E{\|P_t^{(N)}-P_t\|^2 \|\sigma\|^2 |\xi|^2} + 2n\|P_t\|^2 \|\sigma\|^2\E{ |\xi|^2} \\
&= \frac{2n}{N_e\stepsize}(\frac{\tilde{C}_1}{N} + \tilde{C}_2)
\end{align*}
where we used \eqref{eq:error-S} and the fact that $\xi$ and $P_t^{(N)}$ are independent random variables, and exponential convergence of $P_t$ to $P_{\infty}$ ensures a uniform bound on $\|P_t\|$.
The second term can be estimated as
\begin{align*}
\E{\| {K}_t^{(N)} - K_t \|^2} &= \E{\| R\inv B\tp ({P}_t^{(N)} - P_t) \|^2} \\
& \le \|R\inv B\tp\|^2 \frac{\tilde{C}_3}{N}
\end{align*}
using properties of matrix norms and equivalence of $\|\cdot\|$ and $\|\cdot\|_F$. 

\section{Simulation details for numerical comparisons}
\label{app:numcomp}

\subsection{Policy optimization}
\label{app:numcomp-po}

We compare our algorithm with [K19] and [Z21]. Codes for [K19] were found in the supplementary material of their paper \cite{krauth-2019}, while codes for [Z21] are on github \cite{kzcodes} as a part of the paper \cite{kz-codes-paper} which builds on \cite{zhang-2021-neurips}. 

\subsubsection{Discussion of results}
\label{app:compdisc}

\textbf{Comparison with [K19]:} from the sample complexity comparison in Table \ref{tb:samp-comp}, we see that both [K19] and dual-EnKF have similar sample complexity. However, since [K19] is a policy gradient type algorithm, they need to run copies of the LQG system forward in time for each iteration of their algorithm. Since we need to execute only one iteration of the linear dynamical system, it can be expressed as vector matrix multiplications in python, and we use that structure to leverage the vectorization capabilities of numpy to obtain an order of magnitude acceleration in simulation time. 

\textbf{Comparison with [Z21]:} from the sample complexity comparison in Table \ref{tb:samp-comp}, we see that [Z21] has a much higher sample complexity than dual EnKF. it stems from that fact that [Z21] have a policy gradient type approach, which simulates the system forward in time for each iteration. Moreover, [Z21] estimates the finite horizon gain as a function of time, which requires stacking all the gains into one large matrix, that increases the problem size significantly.

\subsubsection{Model and simulation parameters}
We run all three algorithms on a discrete time  used in [Z21], and plot the simulation time required to reach a specified relative error in gain and cost. We recall that [Z21] considers a finite time LEQG problem with $\theta > 0$ and [K19] considers an infinite horizon LQG problem.  Both works are in discrete time, and the details of the dynamical system, the optimal control parameters, and simulation parameters are all below. We convert the discrete time system to a continuous time system for running the dual EnKF (conversion formulas in Appendix \ref{app:numcomp}).

The discrete time system has the following parameters (same as the one in \cite[Section 5]{zhang-2021-neurips}):
\begin{equation*}
	\begin{aligned}
		A_d &= \begin{bmatrix}
			1 & 0 & -5\\
			-1 & 1 & 0\\
			0 & 0 & 1
		\end{bmatrix},\quad
		B_d  = \begin{bmatrix}
			1 & -10 & 0\\
			0 & 3 & 1 \\
			-1 & 0 & 2 
		\end{bmatrix},\quad
		\sigma_d = \begin{bmatrix}
			5 & 0 & 0\\
			0 & 2 & 0\\
			0 & 0 & 2
		\end{bmatrix}\\
		Q_d &= \begin{bmatrix}
			1 & 0 & 0\\
			0 & 1 & 0\\
			0 & 0 & 1
		\end{bmatrix},~\text{and}~
		R_d = \begin{bmatrix}
			4 & -1 & 0\\
			-1 & 4 & -2\\
			0 & -2 & 3
		\end{bmatrix}
	\end{aligned}
\end{equation*}
We convert it to continuous time using a first order approximation with a discretization step size $\stepsize_d = 0.1$s  as follows:
\begin{align*}
A &= \frac{\log(A_d)}{\stepsize_d},\quad 
B = \frac{B_d}{\stepsize_d}, \quad
\sigma = \frac{\sigma_d}{\stepsize_d} \\
Q &= \frac{Q_d}{\stepsize_d}, \quad  R = \frac{R_d}{\stepsize_d} , \quad G = Q_d 
\end{align*}
A simulation step size $\stepsize = 0.002$s is used in our dual EnKF algorithm and for the finite time LEQG simulation, the risk parameter is set as $\theta = 0.2$ with the simulation time horizon as $T=0.5$s, i.e. 5 discrete time steps.

\subsubsection{Description of procedure}

In the Figure \ref{fig:comparison}, the results of the number of particles with $N=100, 200, 300, \ldots, 1000$ in our dual EnKF algorithm are presented. 
For the algorithm in [K19], iteration steps of 100,000 is used, and the results of time horizon as 3000, 3500, 4000, 5000, 6000 and 6500 time steps are presented.
For the algorithm in [Z21], six linear spacing results of error in cost (and in gain), ranging from min error with 100,000 iteration steps to 10\% (and 60\%), are presented.
All results are averaged over 100 runs to find the expectation and the standard deviation. All the results plotted in Figure \ref{fig:comparison}. The [Z21] and [K19] algorithms are run till a certain error in cost or gain is reached, and the computation time is recorded. The EnKF is always run for 10 second simulation period, and the computation time is recorded. Therefore, we document the computation time required to reach a certain value of error. Since the PO algorithms have random number generators, the computation time needed to achieve a certain error lies in a range of values. The EnKF also has a random number generator hence it also exhibits a range of values, but they are very tightly concentrated around the mean.

All the simulations are executed on a desktop iMac computer equipped with a 3 GHz 6-Core Intel Core i5 processor with python3.
The device also has a 32GB 2667MHz DDR4 memory and a Radeon Pro 560X 4GB graphic card. Simulation times were measured in python using the {\tt time.time()} function found in the {\tt time} module.

When analyzing error in gain ($\epsilon^{\text{gain}}$), we first recall that for [Z21] and dual EnKF, due to the finite time horizon, both $ K^{\text{alg}}$ and $\bar{K}^{\text{opt}}$ are functions of time, while for [K19] it is only one values, the infinite horizon gain. To compute the error in gain, we first need to find the optimal gain ($\bar{K}^{\text{opt}}$) and the gain output by the algorithm ($ K^{\text{alg}}$). For [Z21] and [K19] we use the output directly from the codes provided in the following manner. Both codes solve the Riccati equation and output $\bar{K}^{\text{opt}}$. To find the error, we calculate the relative error between the $ K^{\text{alg}}$ obtained after each iteration, and $\bar{K}^{\text{opt}}$. For dual EnKF, we find $\bar{K}^{\text{opt}}$ by solving the Riccati equation and compare it with the $\bar{K}^{\text{alg}}$ from the algorithm. To get an estimate of the infinite horizon gain using the dual EnKF, we simply consider refer to the formula in Section \label{sec:optcon}.

To calculate the error in cost ($ \epsilon^{\text{cost}}$), we need the cost produced by the system when the gain produced by the algorithm is applied to it ($c^{\text{alg}}$), and the cost produced on application of the optimal gain ($c^{\text{opt}}$). For [Z21] and [K19] we use the output directly from the codes provided. The codes calculate both $c^{\text{alg}}$ and $c^{\text{opt}}$. For dual EnKF, to find $c^{\text{opt}}$ and $c^{\text{alg}}$, we find the cost incurred by applying the optimal infinite horizon gain and estimated infinite horizon gain respectively to the system. Given a gain, the cost incurred is computed by solving a Lyapunov equation \cite{davis} for LQG or running a system forward in time and averaging the cost incurred for LEQG.

The simulation time for EnKF records time needed to execute both Algorithm \ref{alg:P} and \ref{alg:EnKF}.

\subsection{Path integral control}
\label{app:numcomp-pi}

\subsubsection{Discussion of results}

We observe that dual EnKF provides over an order of magnitude gain in simulation time, even though the path integral approach is a model based approach. In this simulation as well, we leverage python computation speed in performing matrix vector multiplications.

\subsubsection{Model and simulation parameters}

We run both algorithms on the spring mass damper system (Appendix \ref{app:smd}). We discretize the system when implementing the MPPI algorithm using a discretization step size of 0.1s. We vary the number of masses in the system to vary the dimension of the state.

\subsubsection{Description of procedure}



In Figure \ref{fig:comparison}, we present a comparative analysis of two algorithms: the dual EnKF algorithm and the MPPI algorithm. Both algorithms were evaluated over a 10-second simulation period, with each configuration tested across 15 independent runs to determine average performance metrics and their standard deviations. Therefore, we document the error achieved and the corresponding computation time achieved by that algorithm. Since both algorithms have random number generators, for the same computation time, there is a range of errors that are achieved.

For the dual EnKF algorithm, we tested four different particle configurations ($N =$ 100, 500, 1000, and 5000 particles), while the MPPI algorithm was evaluated using four different particle counts (10, 50, 100, and 500 particles).

To assess scalability, we tested both algorithms across three system dimensions corresponding to 5 masses (10-dimensional system), 10 masses (20-dimensional system), and 20 masses (40-dimensional system). For each configuration, we measured two key performance metrics: computation time and cost error. This experimental setup allows us to analyze how both algorithms perform across different particle counts and system dimensions.

All the simulations are executed on a desktop iMac computer equipped with a 3 GHz 6-Core Intel Core i5 processor with python3.
The device also has a 32GB 2667MHz DDR4 memory and a Radeon Pro 560X 4GB graphic card. Simulation times were measured in python using the {\tt time.time()} function found in the {\tt time} module.

The simulation time for EnKF records time needed to execute both Algorithm \ref{alg:P} and \ref{alg:EnKF}.

\section{Numerical illustration of error formulas~\eqref{eq:error-S},\eqref{eq:error-Sinf})}

\subsection{Spring mass damper model}
\label{app:smd}
\newcommand{\dsp}{d_s}
\newcommand{\toep}{\mathbb{T}}
\newcommand{\id}{\identity}

This system is taken from~\cite{mohammadi_global_2019}.
Let the number of masses be $d_s$.
The matrices $A$ and $B$ are as follows:
\begin{align*}
A = \begin{bmatrix}
0_{\dsp \times \dsp} & \id_{\dsp} \\ -\toep & -\toep
\end{bmatrix}, \quad B = \begin{bmatrix}
0_{\dsp \times \dsp} \\ \id_{\dsp}
\end{bmatrix}
\end{align*}
then the dimension of the system is $d = 2d_s$, and  $\toep \in
 \R^{\dsp \times \dsp}$ is a  Toeplitz matrix with $2$ on the main
 diagonal and $-1$ on the first sub-diagonal and first
 super-diagonal. We let $C,R,G$ be identity matrix of suitable dimension. The two values of $\theta$ are $\{-0.8,1.1\}$. For Figure \ref{fig:mse} $\sigma = 0.1 B$, $T = 10$s and we average MSE data over 500 runs to find the expectation, and for Figure \ref{fig:smdenergy} $\sigma = 0.3B$, $T = 5$s and we average energy data over 100 runs to find the expectation. The simulation step size is $\tau = $0.02s for both. For Figure \ref{fig:smdenergy} we use 500 particles for all simulations.

We evaluate the control algorithm obtained from the dual EnKF on the spring mass damper system and plot the energy of the system (defined as the norm square of the state). We use 1000 particles, and results are shown over an average of 100 simulations. We see that for as high as 80 dimensions, our algorithm manages to reduce the energy and keep it sufficiently close to zero. Results are found in Figure \ref{fig:smdenergy} for the stable and unstable spring mass damper system (where the latter is a mathematical construction obtained by reversing the sign of $A$ to change stability properties of the uncontrolled system).

%

\begin{figure}
\centering
\includegraphics[scale=0.4]{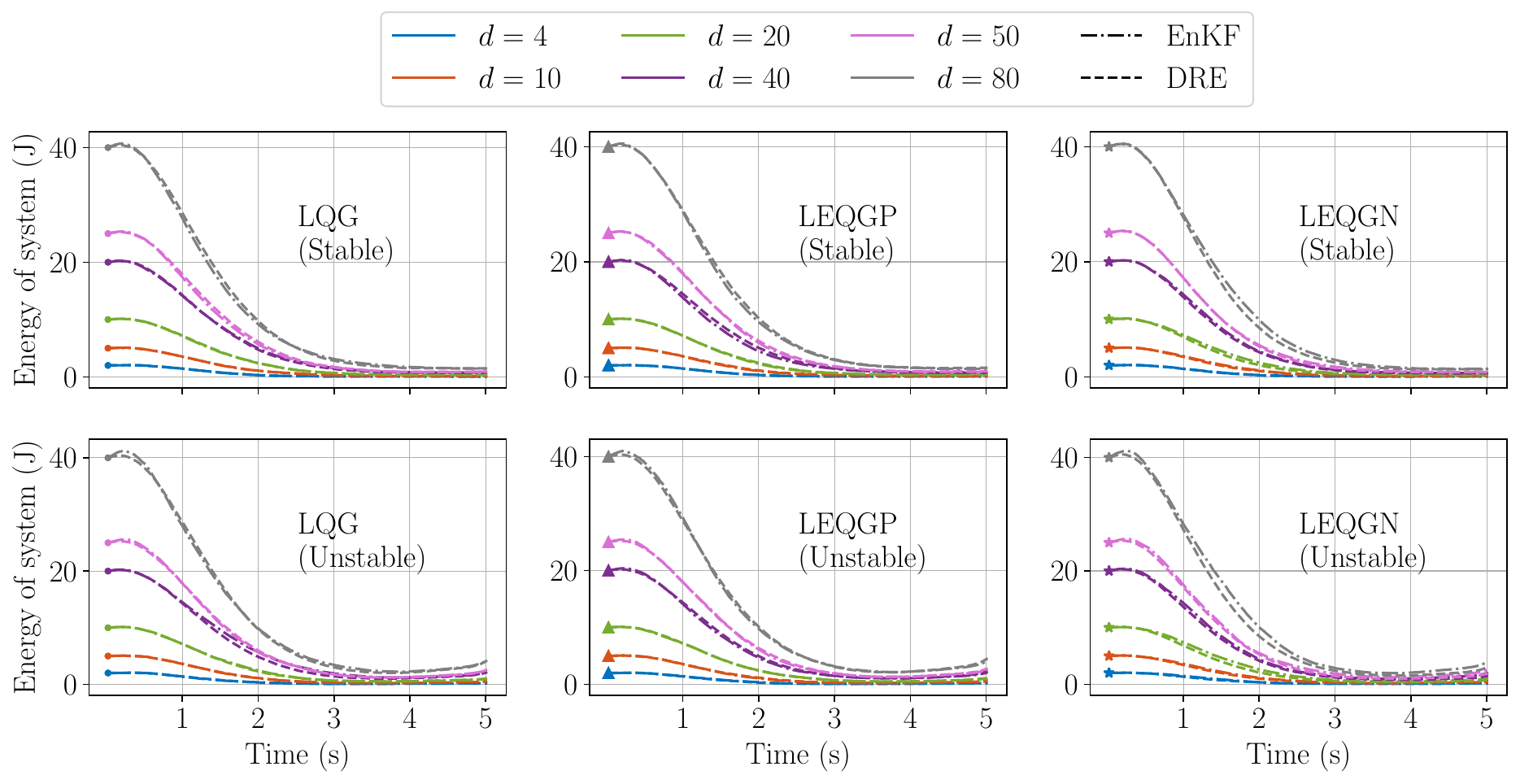}
\caption{Performance of all three controllers on stable spring mass damper system.}
\label{fig:smdenergy}
\end{figure}

\subsection{Random System}
\label{app:addnumrandom}

We choose a random d-dimensional system is in its controllable canonical form with
\begin{equation*}
A = \begin{bmatrix}
0&1&0&0&\ldots&0\\
0&0&1&0&\ldots&0\\
\vdots & & & & & \vdots\\
a_1 & a_2 & a_3 & a_4 &\ldots & a_d
\end{bmatrix},\quad 
B = \begin{bmatrix}
0\\
0\\
\vdots \\
1 
\end{bmatrix}
\end{equation*}
where $(a_1,\ldots,a_d)\in \mathbb{R}^d$ are i.i.d.
samples drawn from $\normal(0 ,1)$.  The matrices $C,R,G$, are identity
matrices of appropriate dimension $\sigma = 0.1B$ and $\theta \in \{1.1,-0.8\}$ .  For all simulations, $T=10$, and $\stepsize = 0.02$, and $N=500$ particles.

Figure~\ref{fig:convergence} shows the convergence of the  100 entries in $P_t^{(N)}$ to the solution of the ARE.
Figure \ref{fig:poles} shows the
open-loop poles (eigenvalues of the matrix $A$) and the closed-loop
poles (eigenvalues of the matrix $(A + BK_0^{(N)})$). As noted earlier, the closed-loop poles are all stable, whereas some
open-loop poles have positive real parts.      


\begin{figure}[h]
		\centering
{	
	\subfigure{
         \includegraphics[scale=0.3]{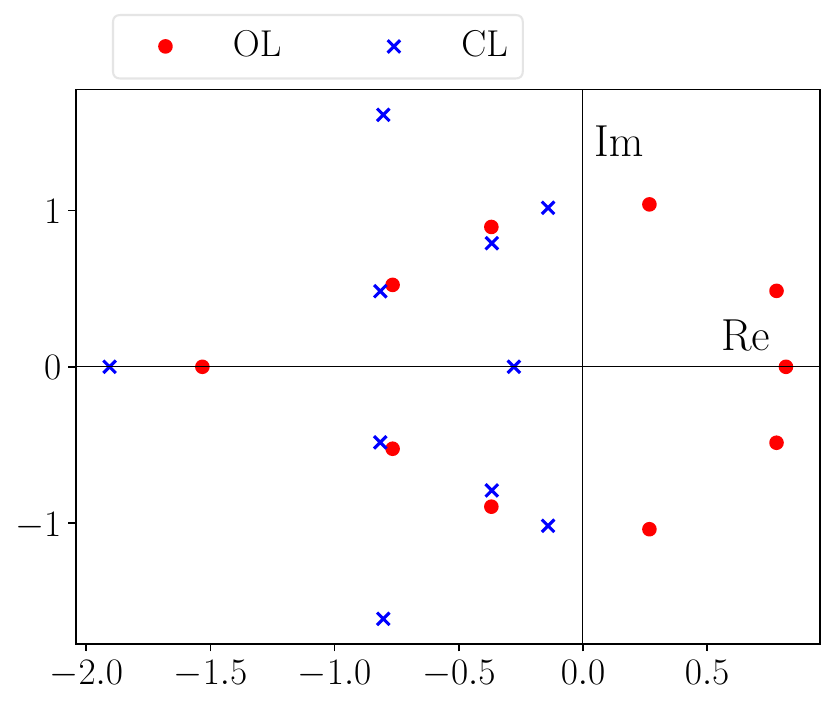}  
         \label{fig:lqg-poles}
    }
	\subfigure{
         \includegraphics[scale=0.3]{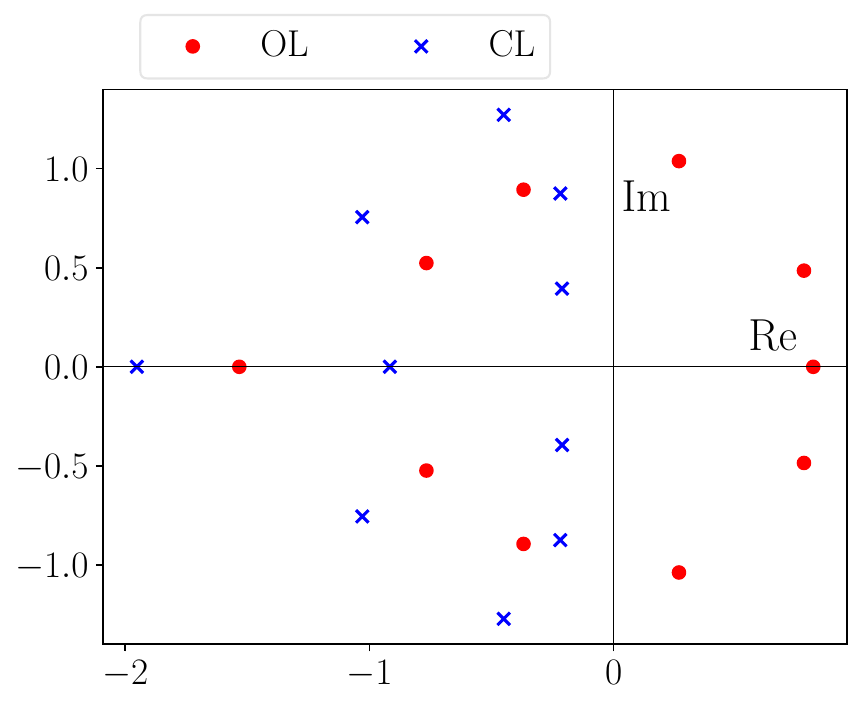}  
         \label{fig:leqgp-poles}
     }
     \subfigure{
         \includegraphics[scale=0.3]{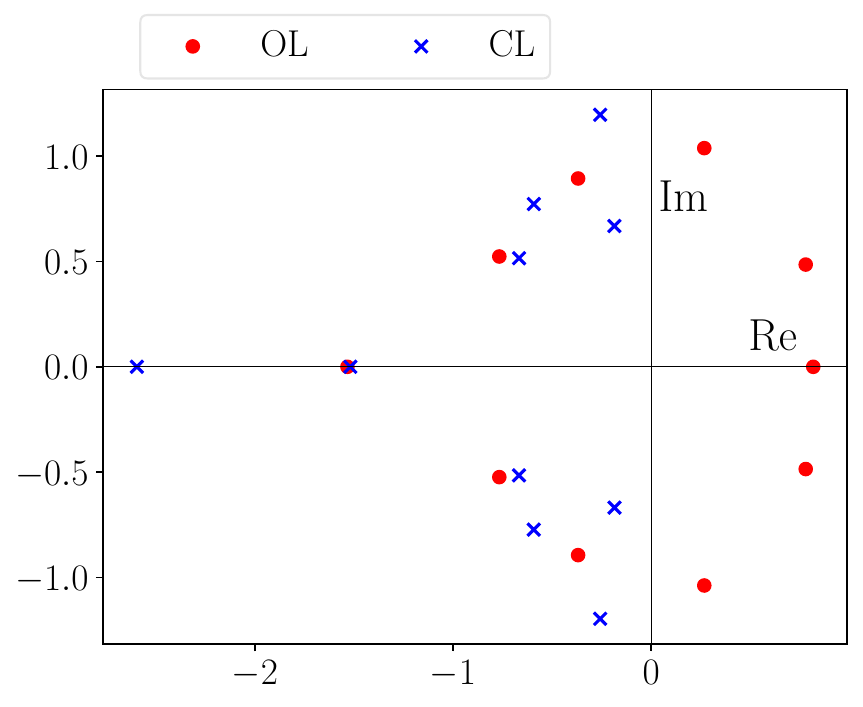}  
         \label{fig:leqgn-poles}
     }
}
		\caption{Open and closed-loop poles for (a) LQG (b) LQEG ($\theta > 0$) (c) LEQG ($\theta < 0$).}
		\label{fig:poles}
		\vspace{-0.1in}
	\end{figure}

\end{document}